\newtheorem{theorem}{Theorem}
\newtheorem{lemma}[theorem]{Lemma}
\newcommand{\prob}[1]{\mathbb{P}\left( #1 \right)}
\newcommand{\E}[1]{\mathbb{E}\left[ #1\right]}
\newcommand{\comment}[1]{}
\begin{document}

\title{Go-With-The-Winner: Client-Side Server Selection for Content Delivery}

\author{Chang Liu$^\dag$, Ramesh K. Sitaraman$^\dag$$^\ddag$, and Don Towsley$^\dag$\\
$^\dag$University of Massachusetts, Amherst\hspace{0.2in}$^\ddag$Akamai Technologies Inc.\\
\texttt{\{cliu, ramesh, towsley\}@cs.umass.edu}}

%\numberofauthors{3}
%\author{
%\alignauthor
%Chang Liu\\
%       \affaddr{University of Massachusetts, Amherst}\\
%       \email{cliu@cs.umass.edu}
%\alignauthor
%Ramesh K. Sitaraman\\
%       \affaddr{University of Massachusetts, Amherst}\\
%       \affaddr{and Akamai Technologies Inc.}\\
%      \email{ramesh@cs.umass.edu}
%\alignauthor 
%Don Towsley\\
%      \affaddr{University of Massachusetts, Amherst}\\
%       \email{towsley@cs.umass.edu}
%}

\maketitle

\begin{abstract}
Content delivery networks deliver much of the world's web and video content by deploying a large distributed network of servers. We model and analyze a simple paradigm for client-side server selection that is commonly used in practice  where each user independently measures the performance of a set of candidate servers and selects the one that performs the best. For web (resp., video) delivery, we propose and analyze a simple algorithm where each user randomly chooses two or more candidate servers and selects the server that provided the best hit rate (resp., bit rate). We prove that the algorithm converges quickly to an optimal state where all users receive the best hit rate (resp., bit rate), with high probability. We also show that if each user chose just one random server instead of two,  some users receive a hit rate (resp., bit rate) that tends to zero. We simulate our algorithm and evaluate its performance with varying choices of parameters, system load, and content popularity. 
\end{abstract}

\section{Introduction}
Modern content delivery networks (CDNs) host and deliver a large fraction of  the world's  web content, video content, and application services on behalf of enterprises that include most major web portals, media outlets, social networks, application providers, and news channels \cite{nygren2010akamai}.  CDNs deploy large numbers of servers around the world that can store content and deliver that content to users who request it. When a user requests  a content item, say a web page or a video,  the user is directed to one of  the CDN's servers that can serve the desired content to the user. The goal of a CDN is to maximize the performance as perceived by the user while efficiently managing its server resources.  

A key functionality of a CDN is the {\em server selection} process by which client software running on the user's computer or device, such as media player or a browser, is directed to a suitable server of a CDN \cite{dilley2002globally}.  The desired outcome of the server selection process is that each user is directed to a server that can provide the requested content with good performance. The metrics for performance that are optimized vary by the type of content being accessed. For instance, good performance for a user accessing a web page might mean that the web page downloads quickly. Good performance for a user watching a video might mean that the content is delivered by the server at a sufficiently high bit rate to avoid the video from freezing and rebuffering \cite{KrishnanS12}.

Server selection can be performed in two distinct ways that are not mutually exclusive. {\em Network-side server selection} algorithms monitor the real-time characteristics of the CDN and the Internet. Such algorithms are often complex and measure liveness and load of the CDN's servers, as well as latency, loss, and bandwidth of the communication paths between servers and users. Using this information, the algorithm computes a good ``mapping'' of users to servers, such that each user is assigned a ``proximal'' server capable of serving that user's content \cite{nygren2010akamai}. This mapping is computed periodically and is typically made available to client software running on the user's computer or device using the domain name system (DNS). Specifically, the user's browser or media player looks up the domain name of the content that it wants to download and receives as translation the ip address of the selected server.
 
A complementary approach to network-side server selection that is commonly is used is {\em client-side server selection} where the client software running on the user's computer or device embodies a server selection algorithm. The client software is typically unaware of the global state of the server infrastructure, the Internet, or other users. Rather, the client software typically makes future server selection decisions based on its own historical performance measurements from past server downloads. Client-side server selection can often be implemented as a plug-in within media players,  web browsers, and web download managers \cite{AkamaiDLM}.  
 
 While client-side server selection can be used to select servers within a single CDN, it can also be used in a multi-CDN setting. Large content providers often make the same content available to the user via multiple CDNs. In this case,  the client software running on the user's device tries out the different CDNs and chooses the ``best''  server  from across multiple  CDNs. For instance, NetFlix uses three different CDNs and the media player incorporates a client-side server selection algorithm to choose the ``best'' server (and the corresponding CDN) using performance metrics such as the video bit rates achievable from the various choices  \cite{adhikari2012unreeling}. Note also that in the typical multi-CDN  case, both network-side and client-side server selection are used together, where the former is used to choose the candidate servers from each CDN and the latter is used by the user to pick the ``best'' among all the candidates.

 \subsection{The Go-With-The-Winner paradigm}
A common and intuitive paradigm that is often used for client-side server selection in practice is what we call  {\em ``Go-With-The-Winner''}  that consist of an initial {\em trial period}  during which each user independently ``tries out'' a set of {\em candidate servers} by requesting content or services from them (cf. Figure~\ref{fig:serverselection}). Subsequently, each user independently {\em decides} on the ``best'' performing server using historical performance information that the user collected for the candidate servers during the trial period. It is commonly implemented in the content delivery context that incorporate choosing a web or video content server from among a cluster of such servers. 

Besides content delivery, the Go-With-The-Winner paradigm is also common for other Internet services, though we do not explicitly study such services in our work.  For instance, BIND, which is the most widely deployed DNS resolver (i.e., DNS client) on the Internet,  tracks performance as a smoothed value of the historical round trip times (called SRTT) from past queries  for a set of candidate name servers. Then BIND chooses a particular name server to query in part based on the computed SRTT values \cite{liu2009dns}. It is also notable that BIND implementations incorporate randomness in the candidate selection process.

\begin{figure}[t]
\centering
\includegraphics[width=0.35\textwidth]{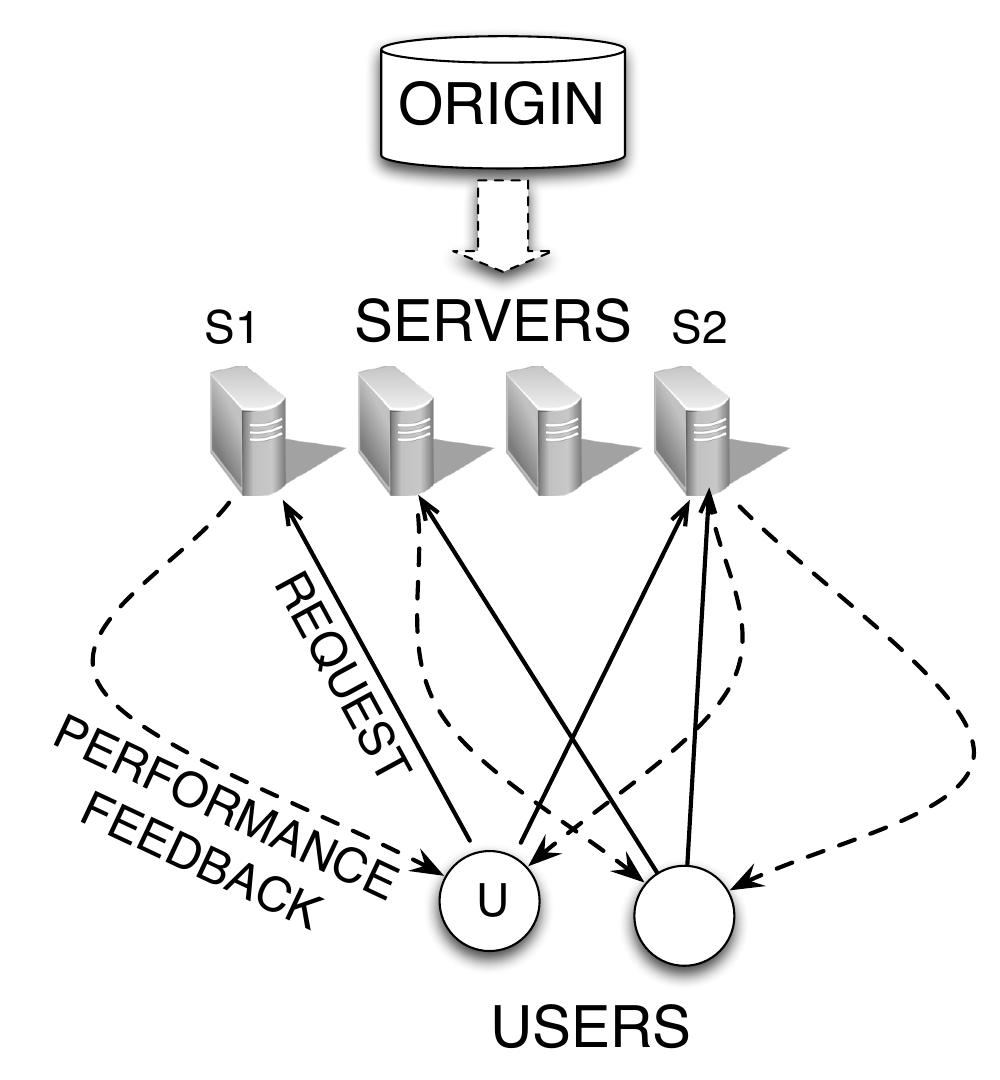}
\caption{Client-side Server Selection with the Go-With-The-Winner paradigm. User $U$ makes request to two candidate servers $S1$ and $S2$. After a trial period of observing the performance provided by the candidate, the user selects the better performing server.}
\label{fig:serverselection}
\end{figure}
The three key characteristics of the Go-With-The-Winner paradigm are as follows.
\begin{enumerate}
\item {\em Distributed control.} Each user makes decisions in a distributed fashion using only knowledge available to it. There is no explicit information about the global state of the servers or other users, beyond what the user can infer from it's own historical experience.
\item {\em Performance feedback only.} There is no explicit feedback from a server to a user who requested service beyond what can be inferred by the performance experienced by the user.
\item {\em Choosing the ``best'' performer.} The selection criteria is based on historical performance measured by the user and consists of selecting the best server according to some performance metric (i.e., go with the winner).
\end{enumerate}
Besides its inherent simplicity and naturalness, the paradigm is sometimes the only feasible and robust solution.  For instance, in many settings, the client software running on  the user's device that performs server selection has no detailed knowledge of the state of the server infrastructure as it is managed and owned by other business entities. In this case, the primary feedback mechanism for the client is its own historical performance measurements.

While client-side server selection is widely implemented, its theoretical foundations are not well understood. A goal of our work is to provide such a foundation in the context of web and video content delivery.  {\em It is not our intention to model a real-life client-side server selection process in its entirety which can involve other adhoc implemention-specific considerations. But rather we abstract an analytical model that we can explore to extract basic principles of the paradigm that are applicable in a broad context.}

\subsection{Our contributions}
 We propose a simple theoretical model for the study of client-side server selection algorithms that use the  Go-With-The-Winner paradigm. Using our model, we answer foundational questions such as how does randomness help in the trial period when selecting candidate servers? How many candidate servers should be selected in the trial phase? How long does it take for users to narrow down their choice and decide on a single server? Under what conditions does the selection algorithm converge to a state where all users have made the correct server choices, i.e., the selected servers provide good performance to their users? Some of our key results that help answer these questions follow.

{\em (1)}  In Section~\ref{sec:maxhitrate}, in the context of web content delivery, we analyze a simple algorithm called $\textrm{GoWithTheWinner}$ where each user independently selects two or more random servers as candidates and decides on the server that provided the best cache hit rate,. We show that with high probability, the algorithm converges quickly to a state where no cache is overloaded and all users obtain a 100\% hit rate. Further, we show that  two or more random choices of candidate servers are necessary, as just one random choice will result in some users (and some servers) incurring cache hit rates that tend to zero, as the number of users and servers tend to infinity. This work represents the first demonstration of the ``power of two choices''  phenomena in the context of client-side server selection for content delivery,  akin to similar phenomena observed in balls-into-bins games \cite{mitzenmacherRS2001}, load balancing,  circuit-switching algorithms \cite{cole1998randomized}, relay allocation for services like Skype \cite{Nguyen:2008}, and multi-path communication \cite{Peter:2007}. 

{\em (2)}   In Section~\ref{sec:maxbitrate}, in the context of video content delivery, we propose a simple algorithm called $\textrm{MaxBitRate}$ where each user independently selects two or more random servers as candidates and decides on the server that provided the best bit rate for the video stream,  We show that with high probability, the algorithm converges quickly to a state where no server is overloaded and all users obtain the required bit rate for their video to play without freezes. Further, we show that two or more random choices of candidate servers are necessary, as just one random choice will result in some users receiving bit rates that tend to zero, as the number of users and servers tends to infinity. 

{\em (3)} In Section~\ref{sec:empirical}, we go beyond our theoretical model and simulate algorithm $\textrm{GoWithTheWinner}$ in more complex settings. We establish an inverse relationship between the length of the history used for hit rate computation (denoted by  $\tau$) and the failure rate defined as the probability that the system converges to a non-optimal state. We show that as $\tau$ increases the convergence time increases, but the failure rate decreases. We  also empirically evaluate the impact of the number of choices of candidate servers. We show that two or more random choices are required for all users to receive a $100\%$ hit rate. Though even if only  70\% of the users  make two choices,  it is sufficient for $95\%$  of the users to receive a $100\%$ hit rate. Finally, we show that the convergence time increases with system load. But, convergence time decreases when the exponent of power law distribution that describes content popularity increases.

\section{Hit Rate Maximization for Web Content}
\label{sec:maxhitrate}
The key measure of web performance is {\em download time} which is the  time taken for a user to download a web object, such as a html page or an embedded image. CDNs enhance web performance by deploying a large number servers in access networks that are ``close'' to the users. Each server has a cache that is capable of storing web objects. When a user requests an object, such as a web page,  the user is directed to a server that can serve the object (cf. Figure~\ref{fig:serverselection}). If the server already has the object in its cache, i.e, the user's request is a  {\em cache hit}, the object is served from the cache to the user. In this case, the user experiences good performance, since the CDN's servers are proximal to the user and the object is downloaded quickly.  However, if the requested object is not in the server's cache, i.e., the user's request is a {\em cache miss}, then the server first fetches it from the origin, places it in its cache, and then serves the object to the user. In the case of a cache miss, the performance experienced by the user is often poor since the origin server is typically far away from the server and the user. In fact, if there is a cache miss, the user would have been better off not using the CDN at all, since downloading the content directly from the content provider's origin would likely have been faster!  Since the size of a server's cache is bounded, cache misses are inevitable.  A key goal of server selection for web content delivery is to jointly orchestrate server assignment and content placement in caches such that the cache hit rate is maximized. While server selection in CDNs is a complex process \cite{nygren2010akamai}, we analytically model the key elements that relate to content placement and cache hit rates,  leaving other factors that impact performance such as server-to-user latency for future work.

\subsection{Problem Formulation}
Let  $U$ be a set of $n_u$ users who each request an object picked independently from a set $C$ of size $n_c$ using a power law distribution where the $k^{th}$ most popular object in $C$ is picked with a probability \begin{equation}
p_k \stackrel{\Delta}{=} \frac{1}{k^\alpha \cdot \mathcal{H}(n_c, \alpha)}, \label{eq:powerlaw}
\end{equation} where $\alpha \geq 0 $ is the exponent of the distribution and $H(n_c, \alpha)$ is the generalized harmonic number that is the normalizing constant, i.e.,  $\mathcal{H}(n_c, \alpha) = \sum_{k=1}^{n_c} 1/k^{\alpha} $. Note that power law distributions (aka Zipf distributions) are commonly used to model the popularity of  online content such as web pages, and videos. This family of distributions is parametrized by a Zipf rank exponent  $\alpha$ with $\alpha = 0$ representing the extreme case of an uniform distribution and larger values of $\alpha$ representing a greater skew in the popularity. It has been estimated that the popularity of web content can be modeled by a power law distribution with an $\alpha$  in the range from 0.65 to 0.85 \cite{Breslau:1999,Gill:2007,Fricker:2012}.
The user then sticks with that content and makes a sequence of requests to the set of available servers. Relating to the reality, users tend to stay with one website for a while, say reading the news or looking at a friend's posts. Here the \textsl{whole website} is what we considered a content. We model  the sequence of requests generated by each user as a Poisson process with a homogeneous arrival rate $\lambda$.  Note that each request from user $u$ can be sent to one or more servers selected from $S_u \subseteq S$, where $S_u$ is the server set chosen by user $u$.

Let  $S$ be the set of $n_s$ servers that are capable of serving content to the users. Each server can cache at most $\kappa$ objects and a cache replacement policy such as LRU is used to evict objects when the cache is full. Given that the download time of a web object is significantly different when the request is a cache hit versus a cache miss, we make the reasonable assumption that the user can reliably infer if its request to download an object from a server resulted in a  cache hit or a cache miss immediately after the download completes.

The objective of client-side server selection is for each user $u \in U$ to independently  select a server $s \in S$ using only the performance feedback obtained on whether each request was a hit or a miss. Let  the hit rate function $H(u,s,t)$ denote the probability of user $u$ receiving a hit from server $s \in S_u$ at time $t$. We define the system-wide performance measure $H(t)$, as the best hit rate obtained by the worst user at time $t$,
 \begin{equation}\label{eq:Ht}
  H(t) \stackrel{\Delta}{=} \min_{u \in U} \max_{s \in S_u} H(u, s, t),
\end{equation}
a.k.a. the \textsl{minmax hit rate}.  Our goal is to maximize $H(t)$.In the rest of the section, we describe a simple canonical  ``Go-With-The-Winner''  algorithm for server selection and show that it converges quickly to an optimal state, with high probability.

{\em Note:} Our formulation is intentionally simple so that it could model a variety of other situations in web content delivery. For instance, a single server could in fact model a cluster of front-end servers that share a single backend object cache. A single object could in fact model a bucket of objects that cached together as is often done in a CDN context \cite{nygren2010akamai}. 

\subsection{The \textsl{GoWithTheWinner} Algorithm}
\comment{After each user $u \in U$ has picked an content item using the power law distribution described in Equation~\ref{eq:powerlaw},  algorithm \textsl{GoWithTheWinner} described below is executed independently by each user $u \in U$ to select a server that's likely to have the content. In this algorithm, each user locally executes a simple ``Go-With-The-Winner'' strategy of trying out $\sigma$ randomly chosen candidate servers initially.  Then, using the past hit rate over a time window of length $\tau$ as feedback, each user independently either chooses to continue with all the servers in $S_u$ or decides on a  single server that provided the best performance.  If multiple servers provided a $100\%$ hit rate in line 8 of the algorithm, the user decides to use the first one found.} 

After each user $u \in U$ selects a content item and a set of $\sigma$ servers $S_u$,  the user executes algorithm \textsl{GoWithTheWinner} to select a server likely to have the content. In this algorithm, each user locally executes a simple ``Go-With-The-Winner'' strategy of trying out $\sigma$ randomly chosen candidate servers initially. For each server $s\in S_u$, the user keeps track of the most recent request results in a vector $\mathbf{h}^s=(h^s_1,h^s_2,\cdots,h^s_{\tau})$ where $h^s_k=1$ is the $k$-th recent request results in a hit from server $s$ and $h^s_k=0$ if otherwise. We call $\tau$ the sliding window size. Using the hit rates, each user then independently either chooses to continue with all the servers in $S_u$ or decides on a  single server that provided good performance. If there are multiple servers providing $100\%$ hit rate, the user decides to use the first one found.

%Because the operations for each request takes almost no time, we consider for each request, its arrival time at the user, at the %server and the time its feedback reaches the user to be the same. 

  \LinesNumbered
  \PrintSemicolon
  \SetAlgoLined
  \SetNlSty{textsf}{}{}
  \begin{algorithm}[t]
    \caption{GoWithTheWinner}
 %   \SetKwInOut{InputArgs}{input}
 %   \SetKwInOut{OutputArgs}{output}
  %  \InputArgs{server set $S$}
  %  \OutputArgs{$S_u$}
    Each user $u$ independently chooses a random subset $S_u \subseteq S$ of candidate servers such that $|S_u| = \sigma$ and does the following.\\
    \For{each $s\in S_u$}{
      set $\mathbf{h^s} \leftarrow (h^s_1,h^s_2,\cdots,h^s_{\tau})=\mathbf{0}$\;
      }
      \For{each arrival of request}{
        set $t$ to the current time\;
        Request content $a_u$ from {\em all} servers $s \in S_u$\;
        \For{each server  $s \in S_u$}{
        		$h^s_i \leftarrow h^s_{i-1}, 2\leq i\leq\tau$\;
		$h^s_1 \leftarrow \text{if hit}; h^s_1 \leftarrow 0, \text{if miss}$\;
        		compute hit rate $H_\tau(u,s, t) \leftarrow (\sum_{i=1}^{\tau}h^s_i)/\tau$ \;
         	\If{$H_\tau(u,s, t) = 100\%$}{
        			{\em decide} on server $s$ by setting $S_u \leftarrow \{s\}$\;
			return\;
        		}
	}
    }
  \end{algorithm}

\subsection{Analysis of Algorithm MaxHitRate}
Here we rigorously analyze the case where $n_u = n_c = n_s = n$ and experimentally explore other variants where $n_c$ and $n_u$ are larger than $n_s$ in Section~\ref{sec:nu>ns} and~\ref{sec:empirical}.  Let $H(t)$   be as defined in (\ref{eq:Ht}). If $\sigma \geq 2$, we show that with high probability $H(t) = 100\%$, for all $t \geq T$, where $T = O(\frac{\kappa}{\log (\kappa + 1)} (\log n)^{\kappa+1}\log\log n)$. That is,  the algorithm converges quickly with high probability to an optimal state where {\em every} user has decided on a single server that provides a 100\% hit rate,  and {\em every} server has the content requested by its users.

{\em Definitions.} A server $s$  is said to be {\em overbooked} at some time $t$ if users request more than $\kappa$ distinct content items from server $s$, where $\kappa$ is the number of content items a server can hold. Note that a server may have more than $\kappa$ users and not be overbooked, provided the users collectively request a set of $\kappa$ or fewer content items. Also, note that a server that is overbooked at time $t$ is overbooked at every $t' \leq t$ since the number of users requesting  a server can only remain the same or decrease with time. Finally, a user $u$ is said to be {\em undecided} at time  $t$ if   $|S_u| > 1$ and is said to be {\em decided} if it has settled on a single server to serve its content and $|S_u| = 1$. Note that each user starts out undecided at  time zero, then decides on a server at some time  $t$ and remains decided in all future time later than $t$. Users calculate the hit rates of each of the available servers based on a history record of the last $\tau$ requests, where $\tau$ is called the sliding window size. 

\begin{lemma}\label{lem:overbookhit}
If the sliding window size $\tau = \Theta(\log^{\kappa+1}n)$, the probability that some user $u \in U$ decides on an overbooked server $s \in S_u$ upon any request arrival is at most  $1/n^{\Omega(1)}$. 
\end{lemma}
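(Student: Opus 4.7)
The plan is to fix a pair $(u, s)$ with $s \in S_u$, show that the probability that $u$ ever decides on $s$ while $s$ is overbooked is $n^{-\omega(1)}$, and then take a union bound over the $O(\sigma n)$ such pairs. A decision by $u$ on $s$ at request arrival $t$ requires the most recent $\tau$ requests of $u$ to $s$ to all be cache hits, so the task reduces to showing that a run of $\tau$ consecutive hits at an overbooked server is astronomically unlikely. A useful preliminary is the monotonicity of overbookedness noted right after the statement: the set of users contributing requests to $s$ only shrinks over time (a user that decides abandons its other candidates), so if $s$ is overbooked at the decision instant $t$, then $s$ was overbooked throughout the preceding history window of $u$ at $s$.

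The key step is a lower bound on the per-request miss probability. Because $s$ is overbooked, there are at least $\kappa$ distinct items other than $a_u$ whose requests are routed to $s$, and each such item has aggregate Poisson rate at least $\lambda$, since every contributing user fires at rate $\lambda$. Considering two consecutive requests of $u$ separated by an interarrival time $X \sim \mathrm{Exp}(\lambda)$, a short calculation gives
\[
\Pr\bigl[\text{every one of the $\kappa$ other items fires at least once during } X\bigr] \;\geq\; \int_{0}^{\infty} \lambda e^{-\lambda x}\bigl(1-e^{-\lambda x}\bigr)^{\kappa}\,dx \;=\; \frac{1}{\kappa+1}.
\]
When that event occurs, $a_u$ becomes the least-recently-used item at $s$ and is evicted by LRU, so $u$'s next request is a miss. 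Hence each request of $u$ independently suffers a miss with probability at least $1/(\kappa+1)$; independence follows from the memorylessness of the competing Poisson streams and the fact that each request of $u$ returns $a_u$ to the head of the LRU queue regardless of outcome.

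Multiplying, the probability that $\tau$ consecutive requests of $u$ at $s$ are all hits is at most $(1 - 1/(\kappa+1))^{\tau} \leq \exp(-\tau/(\kappa+1))$, which for $\tau = \Theta(\log^{\kappa+1} n)$ is $n^{-\omega(1)}$. A union bound over the $O(\sigma n)$ candidate pairs and the polynomially-many potential decision instants within the time horizon $T$ then yields the claimed $1/n^{\Omega(1)}$ bound.

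I expect the main obstacle to be a careful justification of the conditional independence of the miss events across $u$'s successive requests. The memorylessness of the competing Poisson streams, combined with the fact that every request of $u$ returns $a_u$ to the head of the LRU queue, should give a renewal-style decomposition, but one must track the LRU state carefully across the window and handle the fact that competing streams may turn off mid-window as other users decide. Monotonicity of overbookedness shows that such drop-outs can only reduce the rate of competing items benignly, keeping at least $\kappa$ competing items active throughout any window in which $s$ is overbooked at the decision instant, and thus preserving the $1/(\kappa+1)$ miss-probability lower bound uniformly.
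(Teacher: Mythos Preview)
Your per-request miss-probability lower bound of $1/(\kappa+1)$ is incorrect, and the gap is exactly the reason the lemma needs $\tau = \Theta(\log^{\kappa+1} n)$ rather than the $\Theta(\kappa\log n)$ that your bound would permit. You take $X\sim\mathrm{Exp}(\lambda)$ to be the spacing between two consecutive requests \emph{of user $u$}, and claim that if each of the $\kappa$ competing items fires at least once during $X$ then $a_u$ is evicted. But $u$ need not be the only user requesting $a_u$ at $s$: there can be $N$ such users, with $N$ as large as $\Theta(\log n/\log\log n)$ with high probability. Any one of the other $N-1$ streams for $a_u$ that fires during $X$ sends $a_u$ back to the head of the LRU queue, so ``all $\kappa$ competitors fire in $X$'' does \emph{not} imply a miss for $u$.

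The paper handles this by measuring time from the last arrival of $a_u$ \emph{from any user}, so the relevant gap is $\Delta\sim\mathrm{Exp}(N\lambda)$, and the correct lower bound on the miss probability becomes
\[
\int_0^\infty N\lambda e^{-N\lambda t}\bigl(1-e^{-\lambda t}\bigr)^{\kappa}\,dt \;=\; \frac{N!\,\kappa!}{(N+\kappa)!}\;\ge\;(N+\kappa)^{-\kappa},
\]
which for $N=\Theta(\log n/\log\log n)$ is only about $(\log n)^{-\kappa}$, not the constant $1/(\kappa+1)$. Plugging this into $(1-\text{miss prob.})^{\tau}$ shows precisely why $\tau=\Theta(\log^{\kappa+1} n)$ is needed to drive the hit-streak probability down to $n^{-\Omega(1)}$. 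Your bound of $n^{-\omega(1)}$ is a red flag: it is strictly stronger than what the lemma asserts, and it is too strong to be true in the model as stated. To repair your argument you must either (i) replace $X$ by the inter-arrival gap of the aggregate $a_u$-stream at $s$ and redo the integral, or (ii) argue separately that $N=1$ for the pairs you care about, which is not true in general here.
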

\begin{proof}
If user $u$ decides on server $s$  then the current request together with the previous $\tau-1$ requests are all hits. Let $H_k$, $k=1,2,\cdots,\tau$ be Bernoulli random variables, s.t. $H_k = 1$ if the most recent $k$-th request of $u$ is a hit and $H_k=0$ if it is a miss. To prove Lemma \ref{lem:overbookhit} we need to show 
\begin{equation}\label{eq:hittau}
\prob{\cap_{k=1}^{\tau}(H_k=1)} \leq n^{-\Omega(1)}.
\end{equation}
Let $t_0$ be the time a request for content $a$ from $u$ is generated and appears at server $s$. Let $t_0-\Delta$ be the time that the last request for $a$ arrives at $s$. Let $H$ be an indicator variable so that $H=1$ if the request at $t_0$ resulted in a hit and $H=0$ if resulted in a miss. Let $A_s=\{a_1,a_2,\cdots,a_M\}$ be the set of different content items requested at $s$, where $M>\kappa$. Let $N_i$ be the number of users requesting $a_i$ from $s$. WLOG, let $a_1=a$ be the content that $u$ requests. $\Delta$ is an exponential random variable, $\Delta\sim Exp(N\lambda)$, where $N=N_1$ is the number of users requesting $a$ at server $s$. Let $X_i, i=2,3,\cdots,M$ be an indicator that a request for $a_i$ arrives at $s$ during time interval $(t_0-\Delta,t_0), X_i\sim Bernoulli(1-e^{-N_i\lambda\Delta})$.
Thus, random variabe $Y=\sum_{i=2}^M X_i$ is the number of requests for different content items that arrive in the time interval. With the server running LRU replacement policy, 
\begin{equation}\label{eq:hitrate2arrivals}
\prob{H=0} = \prob{Y\geq \kappa},
\end{equation}
because more than $\kappa$ different requests other than $a$ must have arrived for content $a$ to be swapped out of the server. (\ref{eq:hitrate2arrivals}) shows that $H$ only depends on the arrival of other requests, which means events $H_k, k=1,2,\cdots,\tau$ are mutually independent. Furthermore\footnote{random variables $U\geq_dV$ if $\prob{U>x}\geq\prob{V>x}$ for all $x$.},
$$
Y = \sum_{i=2}^M X_i \geq_{d} \sum_{i=2}^M X',
$$
where $X'\sim Bernoulli(1-e^{-\lambda\Delta})$. Furthermore because $M\geq\kappa+1$, 
$$
Y \geq_{d} Z,
$$
where $Z\sim Binomial(\kappa, (1-e^{-\lambda\Delta}))$.

Thus, we have 
\begin{align*}
\prob{Y\geq\kappa} & \geq \prob{Z\geq \kappa}\\
	& = \int_0^\infty \prob{Z\geq\kappa|\Delta=t}f_{\Delta}(t) dt \\
	& = \int_0^\infty (1-e^{-\lambda t})^{\kappa}N\lambda e^{-N\lambda t} dt \\
	& = \frac{N!\kappa !}{(N+\kappa) !} \\
	& \geq (N+\kappa)^{-\kappa},
\end{align*}
where $f_{\Delta}(t)$ is the probability density function of $\Delta$.

Note that $N$ is the number of users requesting $a$ at server $s$, and is bounded by $N = O(\frac{\log n}{\log\log n})$, with high probability \cite{raab1998balls}.

Now, we can finally prove (\ref{eq:hittau}). Let $c'$ be an appropriate constant,\\
\begin{align*}
\prob{\cap_{k=1}^{\tau}(H_k=1)} & = \prob{H=1}^{\tau} = (1-\prob{H=0})^{\tau} \\
	& = (1-\prob{Y\geq\kappa)}^{\tau} \\
	& \leq (1-(N+\kappa)^{-\kappa})^{\tau} \\
	& \leq (1-(c'\frac{\log n}{\log\log n}+\kappa)^{-\kappa})^{\tau},
\end{align*}
which is $n^{-\Omega(1)}$ when $\tau = \Theta(\log^{\kappa+1}n)$.
\end{proof}

By bounding the time for $\tau$ requests to arrive at user $u$, we have the following,
\begin{lemma}\label{lem:timefortau}
If user $u$ is not \textsl{decided} with server $s\in S_u$ at time $t$, then the server is \textsl{overbooked} at time $t-\delta$ for $\delta = \frac{\tau+1}{\lambda}c_{0}$ where $c_0>1$ is a constant, with high probability.
\end{lemma}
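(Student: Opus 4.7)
My plan is to prove the contrapositive: if $s$ is \emph{not} overbooked at time $t-\delta$, then user $u$ has already decided on $s$ strictly before time $t$, contradicting the hypothesis. By the monotonicity remark in the definitions, the set of users (and hence distinct items) requesting $s$ can only shrink over time, so ``not overbooked at $t-\delta$'' propagates forward to every $t' \in [t-\delta,t]$: throughout this interval the items currently requested at $s$ form a set of size at most $\kappa$ and therefore fit in the cache.

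Given this, I would argue in two steps. First, even if the cache at $t-\delta$ still holds residual content from an earlier overbooked phase, as soon as each of the at most $\kappa$ currently-requested items is served once after $t-\delta$, LRU loads it into the cache, and it cannot subsequently be evicted because only items in this small current set are ever requested. Each currently-requested item has aggregate Poisson rate at least $\lambda$, so a coupon-collector / Chernoff argument bounds the warm-up time by $\delta_1 = O((\log n)/\lambda)$ with probability $1 - n^{-\Omega(1)}$. After time $t - \delta + \delta_1$, every request $u$ sends to $s$ is a hit.

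Second, the number of requests $u$ generates in the remaining window $(t-\delta+\delta_1,\,t)$ is Poisson with mean $\lambda(\delta-\delta_1) = (\tau+1)c_0 - O(\log n)$. Because $c_0 > 1$ is a constant and $\tau = \Theta(\log^{\kappa+1}n) \gg \log n$ (from Lemma~\ref{lem:overbookhit}), a standard Chernoff tail bound for a Poisson variable gives at least $\tau$ arrivals with probability $1-n^{-\Omega(1)}$. All of these arrivals are hits, so the last $\tau$ entries of $u$'s sliding window for $s$ are $1$s before time $t$, and algorithm \textsl{GoWithTheWinner} decides on $s$.

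The main obstacle I expect is the warm-up bookkeeping --- in particular, verifying that LRU, acting on a cache polluted by older requests, cannot indefinitely delay the loading of the small current set. The cleanest route is to observe that once any currently-requested item is inserted, only non-current items remain candidates for eviction until the cache contains the entire current set, after which it is stable. The coupon-collector and Poisson Chernoff tails are then routine, and a union bound over the at most $n^2$ user/server pairs preserves the $n^{-\Omega(1)}$ failure probability, yielding the ``with high probability'' conclusion.
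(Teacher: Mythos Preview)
Your contrapositive argument is correct, but it is a genuinely different route from the paper's, and in fact works harder than necessary. The paper proves the lemma \emph{directly}: it applies a Poisson lower-tail bound to show that, with probability $1-n^{-\Theta(\log^\kappa n)}$, user $u$ generates at least $\tau+1$ requests in $(t-\delta,t)$; since $u$ is undecided at time $t$, one of the last $\tau$ requests must have been a miss, and a miss under LRU forces $\kappa$ \emph{distinct} items other than $a_u$ to have been requested between that miss and $u$'s immediately preceding request. Those $\kappa$ requesters (plus $u$) are all present at the time of that preceding request, which lies inside $(t-\delta,t)$, so $s$ is overbooked there and hence, by backward monotonicity, at $t-\delta$. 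No cache warm-up or coupon-collector step is needed: the single implication ``miss $\Rightarrow$ overbooked just before'' does all the structural work.

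Your approach trades that one implication for an explicit cache-stabilization phase. This is sound --- your observation that current items, once inserted, can only displace stale items is exactly right, and your coupon-collector/Poisson tail estimates go through because $\tau=\Theta(\log^{\kappa+1}n)$ dominates the $O(\log n)$ warm-up. The cost is the extra moving part ($\delta_1$) and the need to verify that the remaining window still hosts $\tau$ arrivals; the benefit is a more operational picture of \emph{why} an underbooked server eventually yields all hits. One minor note: the lemma is stated for a fixed $(u,s)$ pair, so the final union bound over $n^2$ pairs you mention is not needed here (it is absorbed later in the proof of Theorem~\ref{thm:hitrate}).
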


\begin{proof}
Let random variable $N_{\delta}$ be the number of requests from $u$ during time $(t-\delta, t), N_{\delta}\sim Poisson(\lambda\delta)$.
A bound on the tail probability of Poisson random variables is developed in \cite{PoissonTailBound} as
$$
	\prob{X\leq x} \leq \frac{e^{-\lambda'}(e\lambda')^x}{x^x},
$$
where $X\sim Poisson(\lambda')$ and $x<\lambda'$.

Based on that we can show there are at lease $\tau+1$ requests during $(t-\delta,t)$ w.h.p. as the following,
\begin{align*}
\prob{N_{\delta} < \tau+1} & \leq e^{-\lambda\delta}\frac{(e\lambda\delta)^{\tau+1}}{(\tau+1)^{\tau+1}} = e^{-(\tau+1) c_{0}}(ec_0)^{(\tau+1)}\\
	& = e^{-(\tau+1)(c_0-1)}c_0^{(\tau+1)}\\
	& = n^{-\frac{(\tau+1)}{\log n}(c_0 - 1 - \log c_0)}\\
	& = n^{-\Theta(\log^{\kappa}n)},
\end{align*}
as $c_0>1$ and $\tau= \Theta(\log^{\kappa+1}n)$. Thus, w.h.p. no less than $\tau+1$ requests arrives at $u$. And because user is not decided at time $t$ we know that with high probability, at least one of previous $\tau$ requests receives a miss, which mean between the previous $(\tau+1)$-th request and the miss, there are $\kappa$ different other requests arrive at the server. Thus server $s$ is \textsl{overbooked} at the time the previous $(\tau+1)$-th request arrives, which with high probability is no earlier than $t-\delta$.
\end{proof}

Based on Lemmas \ref{lem:overbookhit} and \ref{lem:timefortau}, we can then establish the following theorem about the performance of Algorithm \textsl{GoWithTheWinner}.

\begin{theorem}
\label{thm:hitrate}
With probability at least $1 - \frac{1}{n^{\Omega(1)}}$,  the minmax hit rate $H(t) = 100\%$ for all $t \geq T$, provided $\sigma \geq 2$ and 
$$T = O(\frac{\kappa}{\log (\kappa + 1)} (\log n)^{\kappa+1} \log\log n).$$
 That is,  with high probability, 
 algorithm \textsl{GoWithTheWinner} converges by time $T$ to an optimal state where each user $u \in U$ has decided on a server $s \in S$ that serves it content with a $100\%$ hit rate. 
\end{theorem}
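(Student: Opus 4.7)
The plan is to run a layered-induction cascade argument, in the spirit of the power-of-two-choices balls-into-bins analysis, using Lemmas \ref{lem:overbookhit} and \ref{lem:timefortau} as the per-round building blocks. Let $\delta = (\tau+1)c_0/\lambda = \Theta((\log n)^{\kappa+1})$ be the quantity from Lemma \ref{lem:timefortau}. For $i = 0, 1, \dots, T/\delta$, let $U_i$ be the set of still-undecided users at time $i\delta$ and $O_i$ the set of overbooked servers at that moment. A useful structural observation is monotonicity: once a user decides it only queries one server, so the population of users listing any given server $s$ in their candidate set can only shrink, the distinct-content count at $s$ can only decrease, and hence $O_0 \supseteq O_1 \supseteq \cdots$.

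The inductive step combines the two lemmas. By the contrapositive of Lemma \ref{lem:timefortau} together with Lemma \ref{lem:overbookhit}, if $u \in U_{i+1}$ then with probability at least $1 - 1/n^{\Omega(1)}$ every server in $S_u$ was already overbooked at time $i\delta$, i.e.\ $S_u \subseteq O_i$. Because each $S_u$ consists of $\sigma \geq 2$ independent uniform choices, a Chernoff plus union bound gives $|U_{i+1}| \leq O(n \cdot (|O_i|/n)^{\sigma})$ with high probability. This is the step where the power of two choices enters: the $\sigma$-th power of $|O_i|/n$ is what drives the fast shrinkage.

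To close the recurrence I need the reverse direction: a bound on $|O_i|$ in terms of $|U_i|$. A server $s$ lies in $O_i$ only if the users still listing $s$ at time $i\delta$ collectively request at least $\kappa+1$ distinct content items. Users who have already decided contribute only the one content each originally picked, so only users in $U_i$ can push a server across the $(\kappa+1)$-distinctness threshold in a new way. A balls-into-bins occupancy argument on the uniformly random candidate sets $S_u$ for $u \in U_i$, combined with the power-law distribution governing their originally chosen content, should yield a bound of the form $|O_i| \leq O(n \cdot (|U_i|/n)^{c})$ for an exponent $c = \Omega(\log(\kappa+1))$ reflecting the $(\kappa+1)$-fold distinctness requirement.

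Chaining the two bounds yields a doubly exponential shrinkage of $|U_i|$; iterating forces $|U_k| < 1$ after $k = O(\kappa \log\log n / \log(\kappa+1))$ rounds, where the $\log\log n$ factor is the standard power-of-two-choices depth and the $\kappa/\log(\kappa+1)$ factor comes from the $(\kappa+1)$-distinctness exponent. Multiplying by $\delta$ per round recovers the claimed total time $T$, and a union bound over the $k$ rounds and over $n$ users and servers keeps the overall failure probability at $1/n^{\Omega(1)}$. Once every user has decided, Lemma \ref{lem:overbookhit} ensures each has chosen a non-overbooked server, which serves its content at a $100\%$ hit rate, proving $H(t) = 100\%$ for all $t \geq T$. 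The main obstacle I expect is the middle step: carefully extracting the polynomial shrinkage of $|O_i|$ in terms of $|U_i|$ in the face of the coupling between the random candidate sets $S_u$, the evolving populations of active candidates at each server, and the skewed power-law content distribution. The cleanest route is via deferred decisions on the $S_u$'s of still-undecided users, conditioned on the identities and server choices of those who have already decided, so that the remaining choices behave like fresh independent uniform samples for the occupancy bound.
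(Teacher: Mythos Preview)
Your plan is a layered-induction cascade in the spirit of Azar--Broder--Karlin--Upfal, whereas the paper proves this theorem by a witness-tree enumeration (closer to Cole et al.). These are the two standard tools for power-of-two-choices analysis, but they are suited to different settings, and in this static-choice setting your plan has a genuine gap.

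The problem is the deferred-decisions step. All candidate sets $S_u$ are drawn at time zero; the dynamics thereafter only \emph{reveal} information about them. Conditioning on which users have decided and on their server choices does \emph{not} leave the $S_u$'s of the undecided users uniform: by your own contrapositive-of-Lemma~\ref{lem:timefortau} argument, the event $u\in U_i$ already forces $S_u\subseteq O_{i-1}$ with high probability, which is a strong bias toward overbooked servers, not a fresh uniform sample. Consequently neither your Chernoff bound for $\abs{U_{i+1}}$ in terms of $\abs{O_i}$ (the indicators $\mathbf{1}\{S_u\subseteq O_i\}$ are not independent of $O_i$) nor the occupancy bound for $\abs{O_i}$ in terms of $\abs{U_i}$ goes through as written. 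The natural rescue---fix a set $O$ and union-bound over all $\binom{n}{\abs{O}}$ possibilities---fails because the Chernoff gain $\exp(-\Theta(n(\abs{O}/n)^\sigma))$ is swamped by $\binom{n}{\abs{O}}$ except when $\abs{O}$ is already essentially constant. (A side remark: even ignoring dependence, the exponent you want from the $(\kappa+1)$-distinctness requirement is $c=\kappa+1$, not $c=\Omega(\log(\kappa+1))$.)

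The paper's proof avoids this dependency by packaging the entire causal chain---user undecided $\Rightarrow$ both its servers overbooked $\Rightarrow$ each has $\kappa+1$ undecided users requesting distinct content $\Rightarrow$ \emph{their} other servers are overbooked $\Rightarrow\cdots$---into one combinatorial object, a $(\kappa+1)$-ary witness tree of depth $\rho=T/\delta$. One then union-bounds over all labeled, pruned witness trees; because each term of the union bound fixes the tree, the relevant events (``this user picked exactly these two servers,'' ``these $\kappa+1$ users picked distinct content'') become independent and their probabilities multiply cleanly. The pruning step handles repeated servers. This is exactly why witness trees, rather than layered induction, are the standard device when all random choices are made up front.
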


This is the main result for the performance analysis of the algorithm. Due to space limit, please see appendix for detailed proof of this theorem. 

Are two or more random choices necessary for all users to receive a $100\%$ hit rate? Analogous to the ``power of two choices''  in the balls-into-bins context \cite{mitzenmacherRS2001}, we show that two or more choices are required for good performance. 
%Specifically, if $\sigma = 1$, with high probability, there exists  users who have chosen an overloaded server and will receive a hit rate that tends zero, as $n$ tends to infinity.
\begin{theorem}
\label{thm:onechoice}
For any fixed constants $0 \leq \alpha < 1$ and $\kappa \geq 1$, when algorithm $\textrm{MaxHitRate}$ uses one random choice for each user ($\sigma = 1$), the minmax hit rate $H(t) = o(1),$ with high probability, i.e.,  $H(t)$ tends to zero as $n$ tends to infinity, with high probability.  
\end{theorem}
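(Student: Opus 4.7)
My plan is to reduce the $\sigma=1$ setting to a classical balls-into-bins game and then exploit the ``flatness'' of Zipf$(\alpha)$ for $\alpha<1$ to show that at least one server is irreparably overloaded with many distinct content requests. Because $\sigma=1$ freezes $|S_u|=1$ at time zero, every user is irrevocably pinned to a single uniformly random server; by the classical maximum-load result of \cite{raab1998balls}, with high probability some server $s^*$ receives $L = (1+o(1))\log n/\log\log n$ users.

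Next I would argue that the $L$ users pinned to $s^*$ request $L$ \emph{distinct} content items with high probability. Each user picks content i.i.d.\ from the power law, so by a union bound the probability of any pairwise collision is at most $\binom{L}{2}\sum_k p_k^2 \le \binom{L}{2} p_{\max}$, where $p_{\max} = 1/\mathcal{H}(n_c,\alpha)$. Since $\mathcal{H}(n,\alpha) = \Theta(n^{1-\alpha})$ for $\alpha<1$, this is $O(L^2 n^{\alpha-1}) = O((\log n)^2 n^{\alpha-1}) = o(1)$. Hence with high probability the $L$ users collectively request $L > \kappa$ distinct items that must compete for the $\kappa$ cache slots of $s^*$; crucially, since $\sigma=1$, no user can ever escape this server.

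The core estimate is then to show that the long-run hit rate at such an overbooked server is $o(1)$. Each of the $L$ distinct items is pulled in by exactly one user at Poisson rate $\lambda$, so the LRU cache at $s^*$ is driven by a perfectly symmetric Independent Reference Model with $L$ equi-rate items and capacity $\kappa$. By symmetry, in steady state each item resides in cache with probability exactly $\kappa/L$, so every user's per-request hit rate at $s^*$ equals $\kappa/L = O(\log\log n/\log n) = o(1)$. Because $\sigma=1$ gives $\max_{s\in S_u} H(u,s,t) = H(u,s^*,t)$ for any user $u$ assigned to $s^*$, we conclude $H(t) = o(1)$ with high probability.

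The main obstacle I expect is formalizing the LRU step for a finite $t$, since $\kappa/L$ is a steady-state quantity while the theorem asks for a bound that holds from some concrete time onward. I would discharge this either by invoking a standard mixing-time bound for LRU-under-IRM, or more directly by a calculation paralleling Lemma~\ref{lem:overbookhit}: once a constant fraction of the $L$ distinct items have arrived at $s^*$ (which happens within $O(L/\lambda)$ time with high probability), a Chernoff-type bound on the number of distinct items appearing between two consecutive requests for the same content forces the per-request miss probability up to $1 - O(\kappa/L)$, giving $H(t) = o(1)$ for all sufficiently large $t$.
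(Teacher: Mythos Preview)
Your approach matches the paper's: use the classical balls-into-bins max-load bound to find a server $s^*$ with $\Theta(\log n/\log\log n)$ users, argue that they request many distinct items, and conclude that the hit rate there is at most $\kappa$ divided by that count, hence $o(1)$. The only real difference is in the distinctness step---you show all $L$ items are distinct via the birthday bound $\binom{L}{2}\sum_k p_k^2 \le \binom{L}{2}\,p_{\max}=o(1)$, whereas the paper union-bounds over candidate content sets $C'$ of size at most $\kappa\,w(n)$ (for an arbitrary slowly growing $w(n)$) to get $|C'|>\kappa\,w(n)$ w.h.p.; both routes feed the same final inequality $H(t)\le \kappa/|C'|$, and the paper, like you, does not dwell on the finite-$t$ LRU transient.
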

Please see appendix for the proof.
\comment{
\begin{proof}
From the classical analysis of throwing $n$ balls into $n$ bins \cite{mitzenmacherRS2001}, we know that there exist a subset $U' \subseteq U$ such that  $|U'| = \Theta(\log n /\log\log n)$ and all users in $U'$ have chosen a single server $s$, with high probability. Now we show that some user in $U'$ must have a small hit rate with high probability. Let $C'$ represent the set of all objects accessed by all users in $S'$. The probability that $|C'| \leq \kappa w(n)$ can be upper bounded as follows, where $w(n)$ is an arbitrarily slowly growing function of $n$. The number of ways of picking $C'$ objects from a set $C$ of $n$ objects is at most $n^{|C'|}$. The probability that a user  in $U'$ will pick an object in $C'$ can be upper bounded by the probability that a user chooses one of the $|C'|$ most popular objects. Thus the probability that a user in $U'$ picks an object in $C'$  is at most $\mathcal{H}(|C'|,\alpha)/\mathcal{H}(n,\alpha) = \Theta( (|C'| / n)^{1-\alpha})$, where $\mathcal{H}(i,\alpha)$ is the $i^{th}$ generalized harmonic number and $\mathcal{H}(i,\alpha)=\Theta(i^{1 -\alpha})$.Thus, the probability that all users in $U'$ pick objects in $C'$  is at most $\Theta((|C'| /n)^{(1 - \alpha)|U'|})$.  Therefore, the probability that $|C'| \leq \kappa w(n)$ is at most
\begin{eqnarray}
& n^{|C'|}  \cdot \Theta((|C'| /n)^{(1 - \alpha)|U'|}) \nonumber\\
 \leq & n^{ \kappa w(n)} \cdot (\kappa w(n)/n)^{\Theta(\log n/\log\log n)}  = o(1) \nonumber
\end{eqnarray}
Thus, probability that $|C'| \leq \kappa w(n)$ is small and hence $|C'| > \kappa w(n)$, with high probability. Since the minmax hit rate $H(t)$ is at most $\kappa/|C'|$ which is at most $1/w(n)$, $H(t)$ tends to zero with high probability.
\end{proof}
}
\subsection{The case when $n_u\gg n_s$}\label{sec:nu>ns}
Now we analyze the case that there are much more users than the number of servers, which is often the case in reality. 
Let $Y_i$ be the number of users associated with $i$ and $Y=\text{max}_{i\in S} Y_i$ be the maximum over all servers.  
Assuming $\sigma=1$ so that all users are initiated with only one randomly selected server, we have the following results on the maximum incoming users over all servers $Y$ and server capacity $\kappa$ for the system to converge to the optimal state that every user gets hit rate $1$.
\begin{theorem}
\label{lem:nu>ns}
\begin{enumerate}
\item When $n_s=n, n_u = n\log n$, with probability $1 - \frac{1}{n_s^{\Theta(1)}}$,  the maximum load (number of associated users) over all servers $Y\leq(1+\sqrt{3})\frac{n_u}{n_s} =(1+\sqrt{3})\log n$. If $\kappa\geq (1+\sqrt{3})\log n$, all users have hit rate 1.\\
\item When $n_s=n, n_u = n^{\alpha}, \alpha>1$, with probability $1 - \frac{1}{n_s^{\omega(1)}}$, the maximum load over all servers $Y=\frac{n_u}{n_s}=n^{\alpha-1}$. Thus if $\kappa\geq \frac{n_u}{n_s} = n^{\alpha-1}$, all users get hit rate $1$.
\end{enumerate}
\end{theorem}

Theorem~\ref{lem:nu>ns} implies that when $n_u>> n_s$ all the servers have balanced load of $\frac{n_u}{n_s}$, thus we don't need more server selection mechanism for load balancing other than just letting all users randomly choose the server. And in this case, it's not beneficial to let users start with more than $1$ randomly selected servers, because with $\sigma=1$ the load on all servers are balanced already. Thus, as long as we have feasible server capacity $\kappa\geq(1+\sqrt{3})\frac{n_u}{n_s}$ for $n_u=n_s\log n_s$ and $\kappa\geq\frac{n_u}{n_s}$ for $n_u=n_s^{\alpha}, \alpha>1$, all the users will have enough resources from the server and have $100\%$ hit rate by randomly select $1$ server.

The number of content items $n_c$ here doesn't not affect the result of load balancing. Actually, the result stays the same when $n_c\geq n_u$. And when the number of content items is much smaller than number of users, $n_c<<n_u$, the cache size can become smaller ($\kappa < \frac{n_u}{n_s}$) because the number of distinct requests at each server becomes smaller.

\section{Bit rate Maximization for Video Content}
\label{sec:maxbitrate}
In video streaming, a key performance metric is the {\em bit rate} at which an user can download the video stream. If the server is unable to provide the required bit rate  to the user, the video will freeze frequently resulting in an inferior viewing experience and reduced user engagement \cite{KrishnanS12}.  For simplicity, we model the server's bandwidth capacity that is often the critical bottleneck  resource, while leaving other factors that could  influence video performance such as the server-to-user connection and the server's cache\footnote{Unlike the web, cache hit rate is a less critical determinant of  video performance. Videos are cached in chunks by the server. The next chunk is often prefetched from origin if it is not  in cache, even while the current chunk is being played by the user,  so as to hide the origin-to-server latency.} for future work. 

\subsection{Problem formulation}
The bit rate required to play a stream without freezes is often the encoded  bit rate of the stream. For simplicity, we assume that each user requires a bit rate of 1 unit for playing its video and each server has the capacity to serve $\kappa$ units in aggregate. And we assume each server evenly divides its available bit rate capacity among all users who keeps a streaming connection with it. We make the reasonable assumption that each user can compute the bit rate that it receives from its chosen candidate servers and that this bit rate is used as the performance feedback (cf. Figure~\ref{fig:serverselection}).

Different from the delivering web content, where users make repetitive requests to the same website with Poisson processes, we consider users for video streaming have persistent connection with the server. We use a discrete time model in this case as compared to web content delivery where everything is in continuous time. We assume after each time unit, the users look at the bit rate provided by each of the available servers and then make decisions according to the performance (measured by bit rate).
The goal of each user is to find a server who can provide the required bit rate of 1 unit for viewing the video.

\subsection{Algorithm MaxBitRate}
After each user $u \in U$ has picked a video object $c_u \in C$ using the power law distribution described in Equation~\ref{eq:powerlaw}, Algorithm $\textrm{MaxBitRate}$ described below is executed independently by each user $u \in U$, in discrete time steps.
\begin{enumerate}
\item Choose a random subset of candidate servers $S_u \subseteq S$  such that $|S_u| = \sigma$.
\item At each time step $t \geq 0 $, do the following:
\begin{enumerate}
\item Request the video content from all servers $s \in S_u$.
\item For each server  $s \in S_u$, compute  $B(u,s, t) \stackrel{\Delta}{=}$ bit rate provided by server $s$ to user $u$ in the current time step.
\item If there exists a server $s \in S_u$ such that $B(u,s, t) = 1$, then {\em decide} on server $s$ by setting $S_u \leftarrow \{s\}$.
\end{enumerate}
\end{enumerate}
Note that the users are executing  a simple strategy of trying out $\sigma$ randomly chosen servers initially.  Then, using the bit rate received in the current time step as feedback, each user independently narrows it's choice of servers to a single server that provided the required unit bit rate. 
If multiple servers provided the required bit rate,  the user decides to use an arbitrary one. Further, note that a user $u$  downloading from a server $s$ at time $t$  knows  immediately whether or not the server is overloaded, since server $s$ is overloaded iff user $u$ received a bit rate of less than 1 unit from the server, i.e., $B(u,s,t) < 1$. This is a point of simplification in relation to the more complex situation for hit rate maximization where any single cache hit is not indicative of a non-overloaded server and a historical average of hit rates over a large enough time window $\tau$ is required as a probabilistic indicator of server overload. And furthermore, this simplification yields both faster convergence to an optimal state in $T = O(\log\log n/\log (\kappa + 1))$ steps and a much simpler proof of that convergence.

\subsection{Analysis of Algorithm MaxBitRate}
As before, we rigorously analyze the case where $n_u = n_s = n_c = n$. Let the {\em minmax} bit rate $B(t)$ be the best bit rate obtained by the worst user at time $t$, i.e., 
$$B(t) \stackrel{\Delta}{=} \min_{u \in U} \max_{s \in S} B(u, s, t).$$
%The proofs of the following theorems are similar to those in Section~\ref{sec:maxhitrate} and we refer the reader to the full paper \cite{LiuST2013}.
\begin{theorem}
\label{thm:bitrate}
When $\sigma \geq 2$, the minmax bit rate converges to $B(t) = 1$ unit, for all $t \geq T$,  within time $T = O(\log\log n/\log (\kappa+1))$, with high probability. When $\sigma = 1$ on the other hand, the minmax bit rate $B(t) =O(\kappa \log\log n/\log n),$ with high probability. In particular, when $\sigma = 1$ and the cache size $\kappa$ is $o(\log n / \log\log n)$, including the case when $\kappa$ is a fixed constant,  $B(t)$ tends to zero as $n$ tends to infinity, with high probability. 
\end{theorem}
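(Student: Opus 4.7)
The plan is to treat the $\sigma \geq 2$ and $\sigma = 1$ cases separately, since they rely on very different phenomena: a power-of-two-choices peeling process in the first case, and a straightforward heavy-bin lower bound in the second.

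For $\sigma = 1$, the initial assignment is exactly $n$ balls thrown independently into $n$ bins, so I would invoke the classical result that the maximum bin load is $(1+o(1))\log n/\log\log n$ with high probability. A user at the worst-loaded server sees bitrate $\kappa/(\log n/\log\log n) = O(\kappa \log\log n/\log n)$, and since it has no alternative candidate, this bitrate can never improve. Hence $B(t) = O(\kappa \log\log n/\log n)$ w.h.p.\ for every $t$, and when $\kappa = o(\log n/\log\log n)$ (in particular when $\kappa$ is a fixed constant) this tends to $0$.

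For $\sigma \geq 2$, the starting observation is that $L_s(t)$ is non-increasing in $t$, because decided users are stationary and undecided users only shed servers upon deciding. In particular, once $L_s(t) \leq \kappa$ for some $t$, the server is non-overloaded at all later times and no user ever decides on an overloaded server; so an overloaded server at time $t$ has exactly $L_s(t) = N_s(t) > \kappa$ undecided neighbors. This yields the clean recursion
\[
U_{t+1} \;=\; \{\, u \in U_t : \forall s \in S_u,\ |N(s) \cap U_t| \geq \kappa + 1 \,\},
\]
where $N(s) = \{u : s \in S_u\}$ is the initial neighborhood of $s$. In other words, $\lim_t U_t$ is the user side of the $(\sigma, \kappa{+}1)$-core of the random bipartite graph obtained by each user choosing $\sigma$ servers uniformly at random, and the algorithm implements the standard peeling process on this graph.

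The heart of the argument is then showing that $q_t := |U_t|/n$ decreases doubly exponentially. Heuristically, each server has $\approx \mathrm{Poisson}(\sigma)$ neighbors and each lies in $U_t$ with probability $\approx q_t$, so the fraction of overloaded servers is $\approx (\sigma q_t)^{\kappa+1}/(\kappa+1)!$; since a user is undecided at time $t+1$ iff all $\sigma$ of its servers are overloaded, the target bound is
\[
q_{t+1} \;\leq\; C\, q_t^{\,\sigma(\kappa+1)} \qquad \text{with high probability,}
\]
for a constant $C = C(\sigma,\kappa)$. A short initial phase (Chernoff on the count of servers with initial load $\geq \kappa + 1$) brings $q_1$ below the threshold $C^{-1/(\sigma(\kappa+1)-1)}$, and iteration then reaches $q_t < 1/n$ (hence $U_t = \emptyset$) within $O(\log\log n/\log(\sigma(\kappa+1))) = O(\log\log n/\log(\kappa+1))$ further rounds. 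The main obstacle will be making this probabilistic step rigorous: the events ``$s$ overloaded at time $t$'' are correlated and $U_t$ itself depends on the whole random graph. I plan to handle it by a union-bound / witness argument: for a putative $|U_{t+1}| = m$, sum over subsets $V \subseteq U$ of size $m$ and candidate overloaded-server sets $B$ of size at most $\sigma m/(\kappa+1)$ (forced by the deterministic edge count $\sigma|U_t| \geq (\kappa+1)|B_t|$), and bound via elementary binomial estimates the probability that every $u \in V$ has $S_u \subseteq B$ and every $s \in B$ has at least $\kappa+1$ neighbors in $V$; carrying through that calculation yields the claimed super-exponential recursion.
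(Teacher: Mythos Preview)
Your treatment of $\sigma = 1$ matches the paper: both invoke the classical $\Theta(\log n/\log\log n)$ max-load bound for $n$ balls into $n$ bins and observe that a user at the heaviest server is stuck with bitrate $\kappa/L_{\max}$ forever.

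For $\sigma \geq 2$ your route is genuinely different. The paper argues via a \emph{witness tree}: assuming some user is still undecided at time $T$, it builds a $(\kappa+1)$-ary tree of depth $\rho = \Theta(\log\log n/\log(\kappa+1))$ whose nodes are servers and whose edges are users (joining a user's two chosen servers), prunes repeated servers, and then union-bounds over all possible pruned trees. The whole point of the unrolling is that every event in the final bound concerns only the \emph{initial} random server choices, which are mutually independent, so no conditioning on intermediate sets $U_t$ is ever needed. Your identification of the dynamics with peeling for the $(\sigma,\kappa{+}1)$-core of the random bipartite graph is correct and elegant, the monotonicity and ``no user ever decides on an overloaded server'' observations are right, and the heuristic recursion $q_{t+1}\lesssim C\,q_t^{\sigma(\kappa+1)}$ is the correct target.

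The gap is in your rigorous step. The event you propose to union-bound --- every $u\in V$ has $S_u\subseteq B$ and every $s\in B$ has at least $\kappa+1$ neighbors \emph{in $V$} --- is the condition for $V$ to be self-sustaining, i.e.\ contained in the core $U_\infty$. Bounding that shows the core is empty but does not by itself yield the convergence \emph{rate}: to control $|U_{t+1}|$ you need $|N(s)\cap U_t|\geq\kappa+1$, not $|N(s)\cap V|\geq\kappa+1$, and $U_t$ is a random set depending on the entire graph, so the choices $S_u$ are no longer fresh conditional on it. (Relatedly, the edge-count argument gives $|B_t|\leq\sigma|U_t|/(\kappa+1)$, not $\sigma m/(\kappa+1)$ with $m=|U_{t+1}|$.) Layered-induction proofs in the two-choice literature resolve this by carefully tracking which random bits have been revealed at each level; the paper's witness-tree technique resolves it by unrolling all $T$ levels at once into a single combinatorial structure whose probability depends only on the independent initial choices. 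Your outline needs one of these two mechanisms to close the argument.
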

Please refer to appendix for the proof.

\section{Empirical Evaluation}
\label{sec:empirical}

\begin{figure*}[t]
  \centering
\begin{minipage}[b]{\textwidth}
  \subfloat[$\alpha=0.65$, $n_u/n_s=1$]{
    \includegraphics[width=0.3\textwidth]{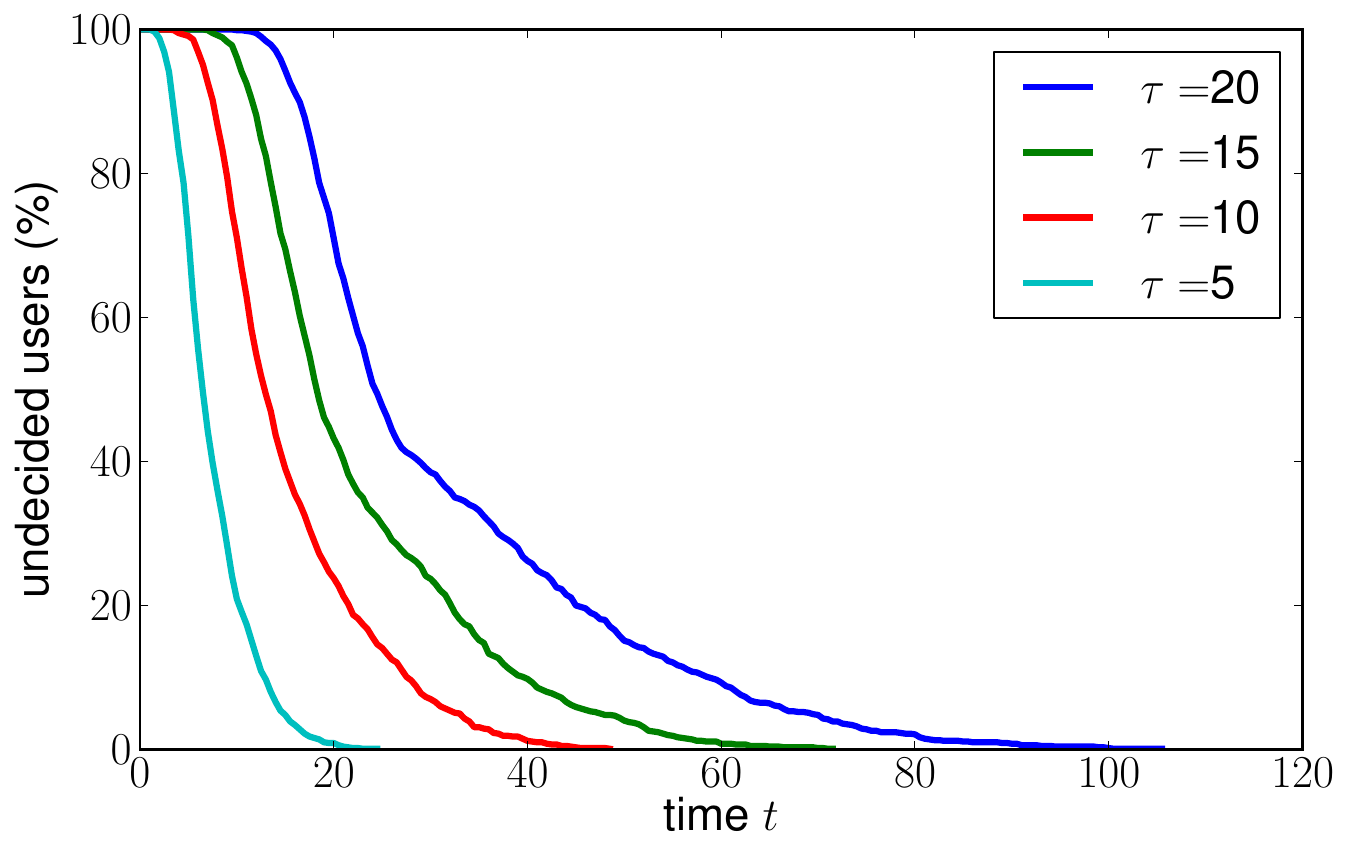}
  }
  \subfloat[$\alpha=0.65$, $n_u/n_s=10$]{
    \includegraphics[width=0.3\textwidth]{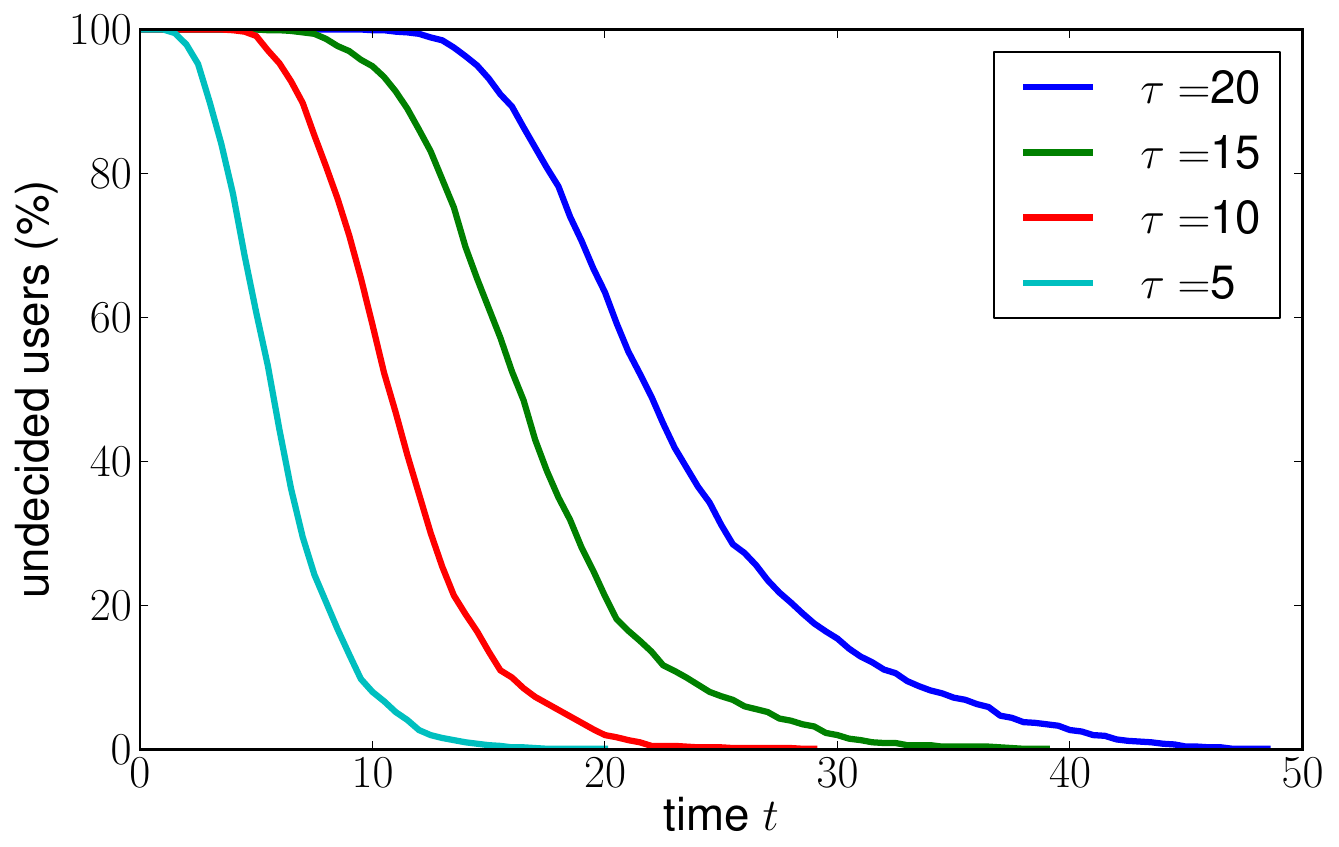}
  }
\subfloat[$\alpha=0.65$, $n_u/n_s=20$]{
   \includegraphics[width=0.3\textwidth]{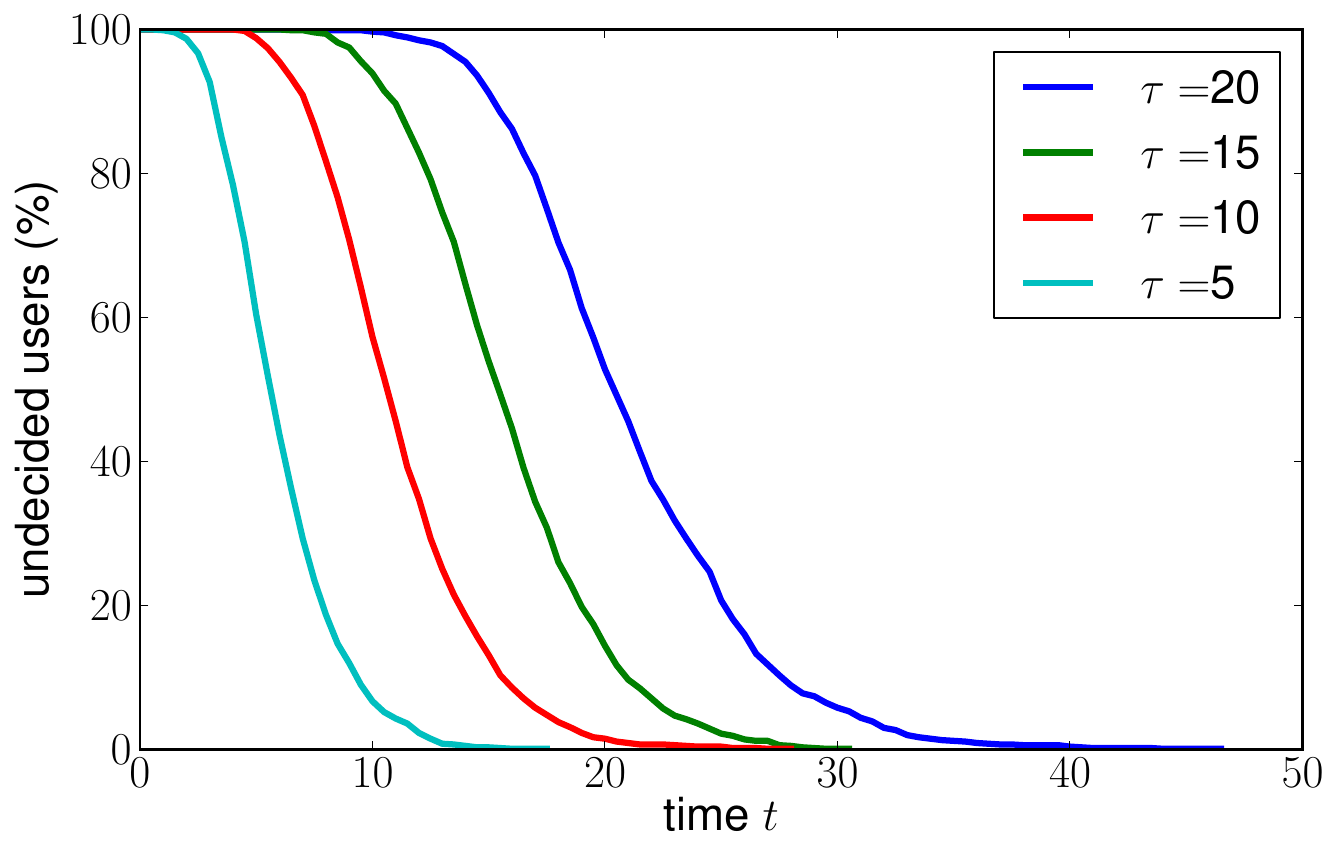}
 }
  \caption{The figures show the percentage of undecided users for a typical power law distribution ($\alpha=0.65$) with spread $\sigma = 2$ and $n_u = 1000$. Note that the undecided users decrease with time in all cases, but the convergence is faster when we use fewer but larger servers by setting $n_u/n_s$ to be larger.   Also, the smaller values of the look-ahead window $\tau$ result in faster convergence.}
\label{fig:assign}
\end{minipage}
\vspace{0.2in}
\begin{minipage}[b]{\textwidth}
  \centering
  \subfloat[$\alpha=0.6$, $n_u/n_s=1$]{
    \includegraphics[width=0.36\textwidth]{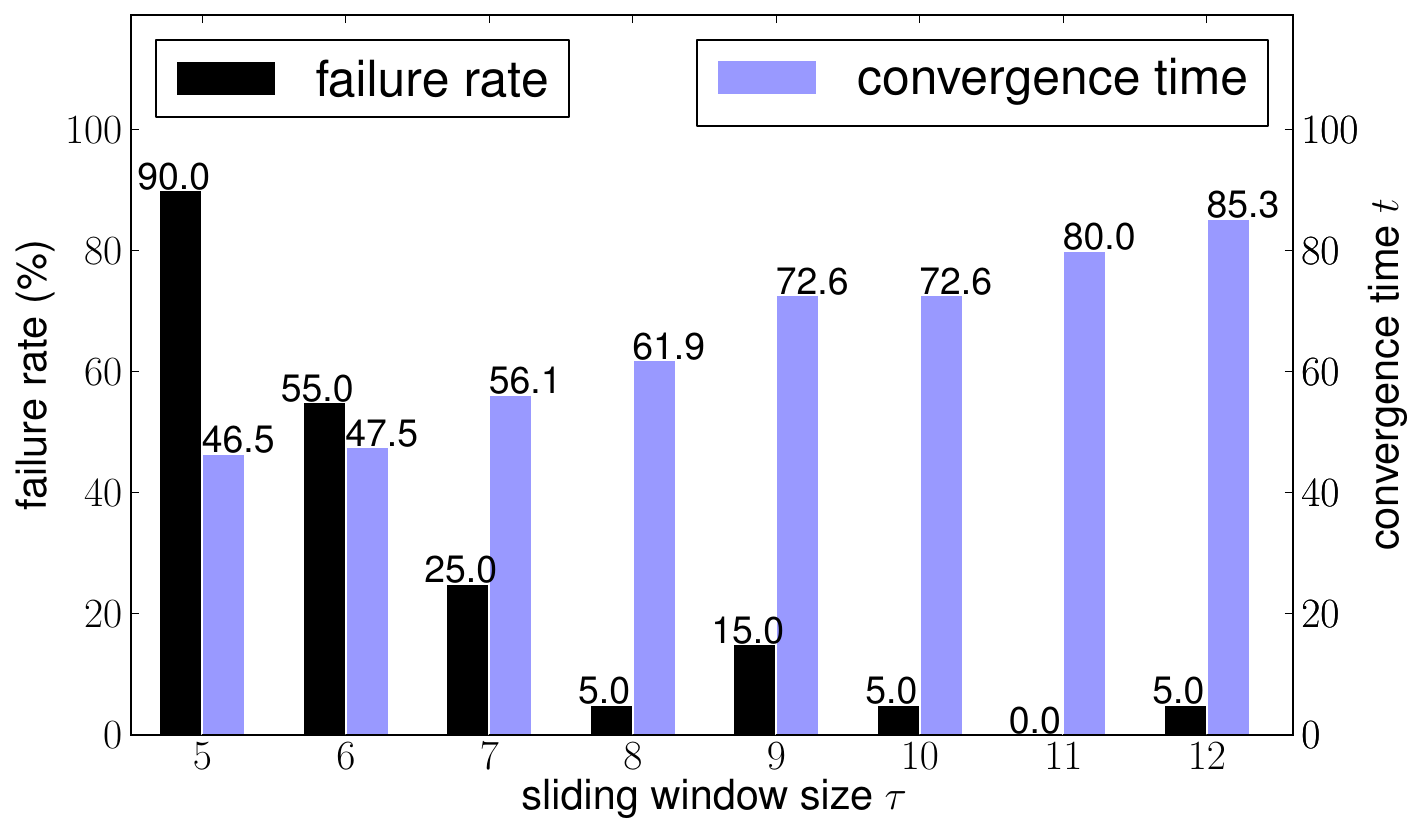}
  }
  \hspace{1.1in}
  \subfloat[$\alpha=0.6$, $n_u/n_s=20$]{
    \includegraphics[width=0.33\textwidth]{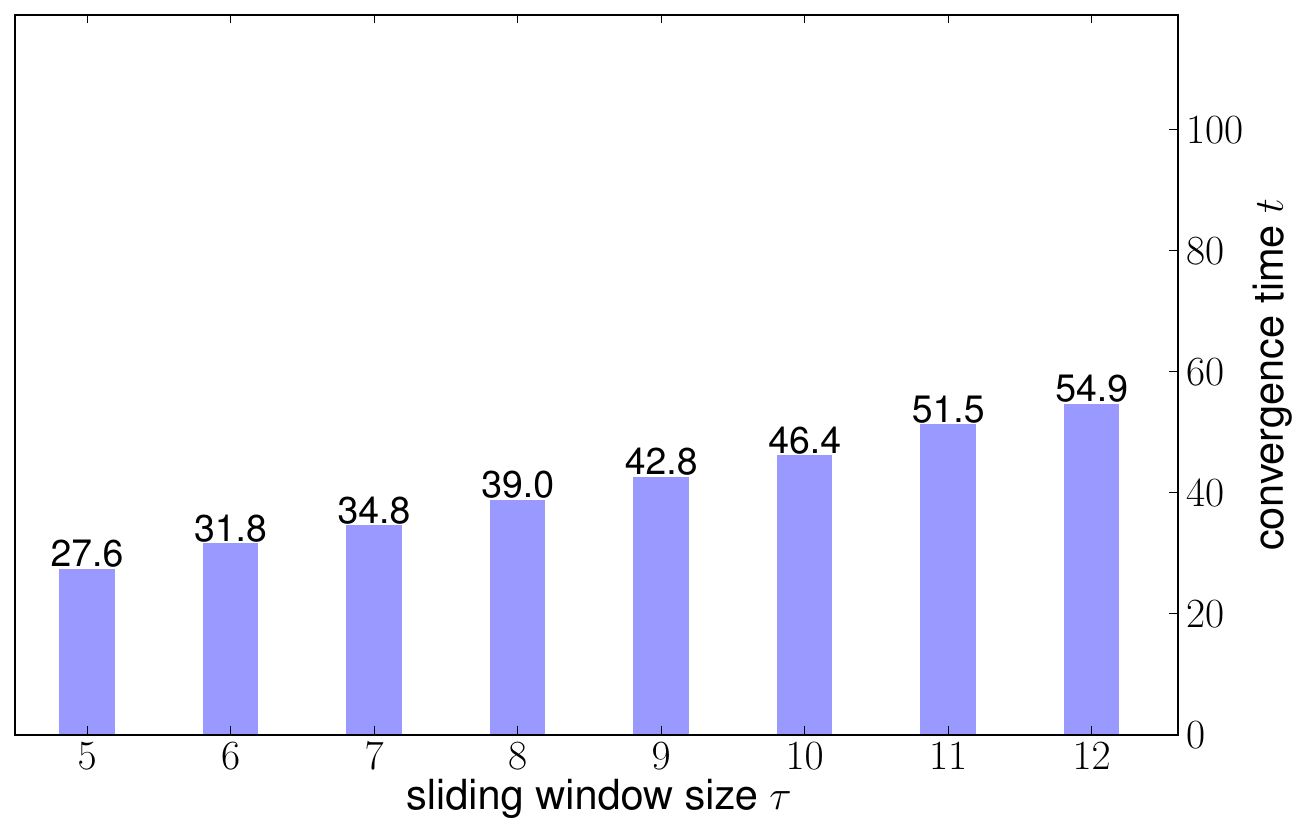}
 }
\caption{Generally, as $\tau$ increases, convergence time increases but  failure rate decreases. It is also true for larger servers ($n_u/n_s$ = 20), only the failure has gone to zero for all investigated sliding window size$\tau$.}
\label{fig:tau}
\end{minipage}

\end{figure*}

\begin{figure*}[t]
  \centering
\begin{minipage}[b]{0.45\textwidth}
\centering
\includegraphics[width=0.68\textwidth]{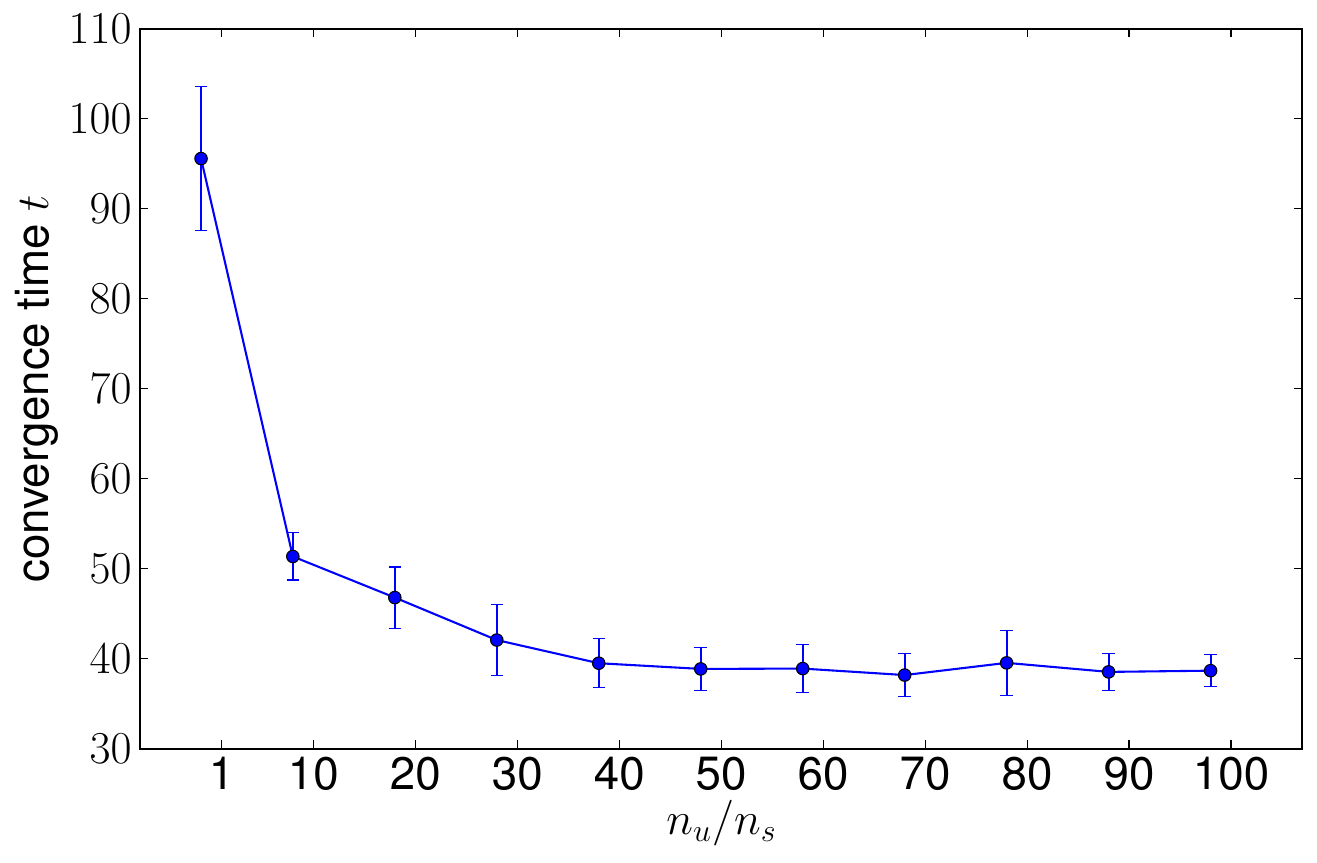}
\caption{As $n_u/n_s$ increases  fewer servers with larger capacity are used and convergence time decreases. The decrease is less pronounced beyond  $n_u/n_s\geq 40$ under this setting ($\alpha=0.65$, $\sigma=2$, $\tau=20$).}
\label{fig:utos}
\end{minipage}
\qquad
\begin{minipage}[b]{0.45\textwidth}
\centering
\includegraphics[width=0.68\textwidth]{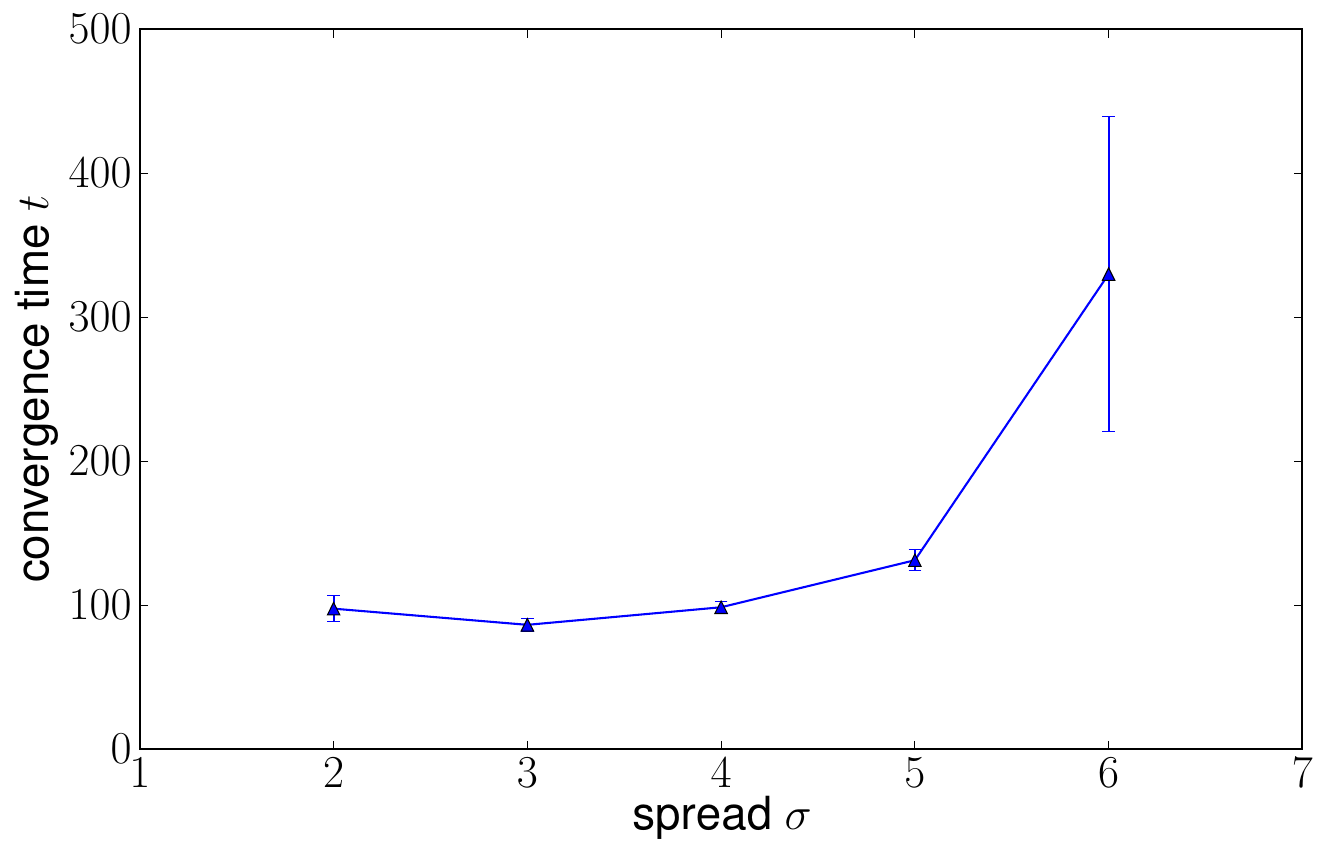}
\caption{There is a very small incremental benefit in using $\sigma = 3$ instead of  $2$, though higher values of $\sigma > 3$ only increased the convergence time. ($\alpha=0.65, n_u/n_s = 1, \tau=20,\kappa=2.$)}
\label{fig:sigma}
%\vspace{0.1in} %<- just to align the two figures, remove after caption added
\end{minipage}
\end{figure*}

\begin{figure*}[t]
  \centering
\begin{minipage}[b]{0.45\textwidth}
\centering
\includegraphics[width=0.7\textwidth]{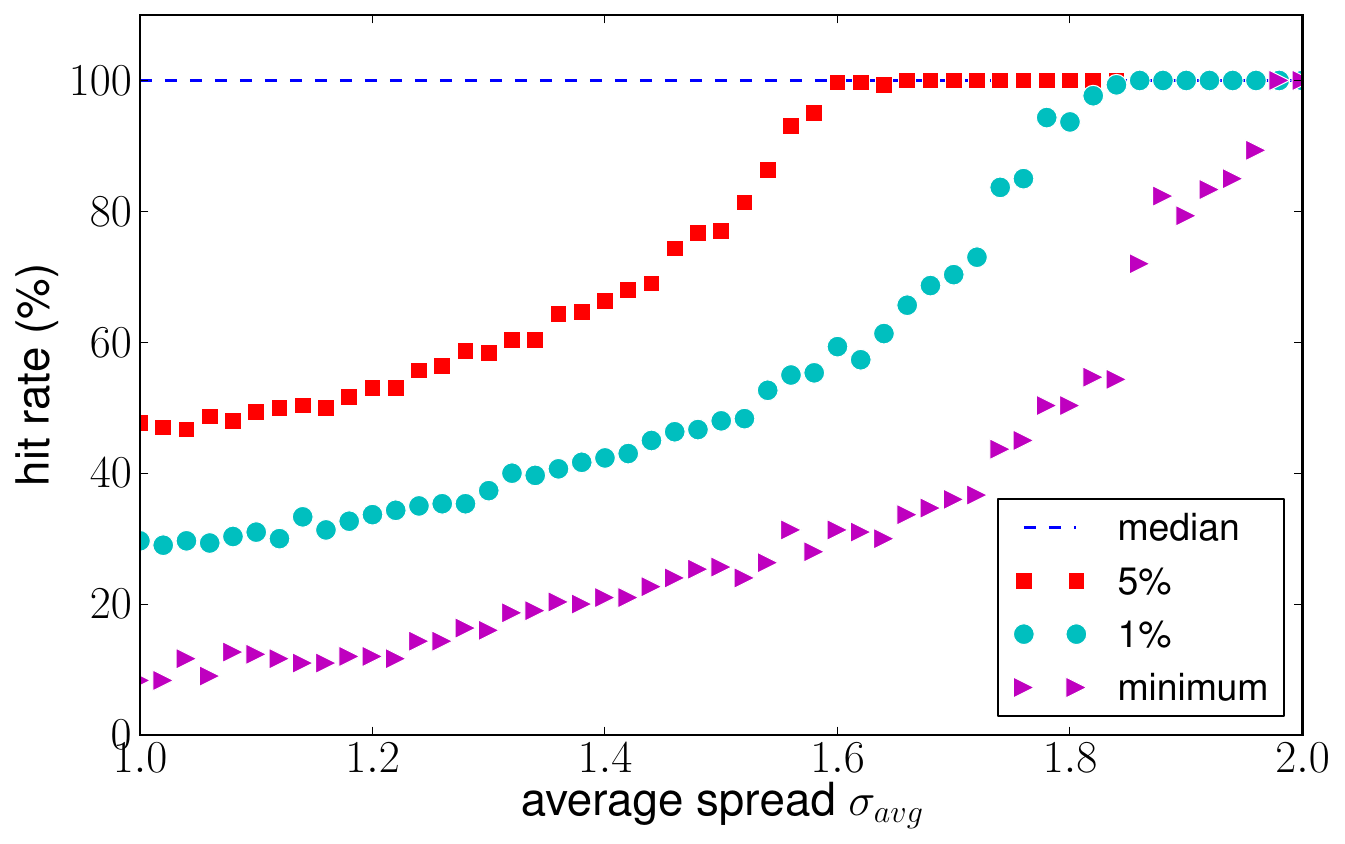}
\caption{Order statistics of the hit rate of the user population. ($\alpha=0.65, n_u/n_s = 1, \tau=10,\kappa=2.$)}
\label{fig:mix}
%\vspace{+0.1in} %<- just to align the two figures, remove after caption added
\end{minipage}
\qquad
\begin{minipage}[b]{0.45\textwidth}
\centering
\includegraphics[width=0.7\textwidth]{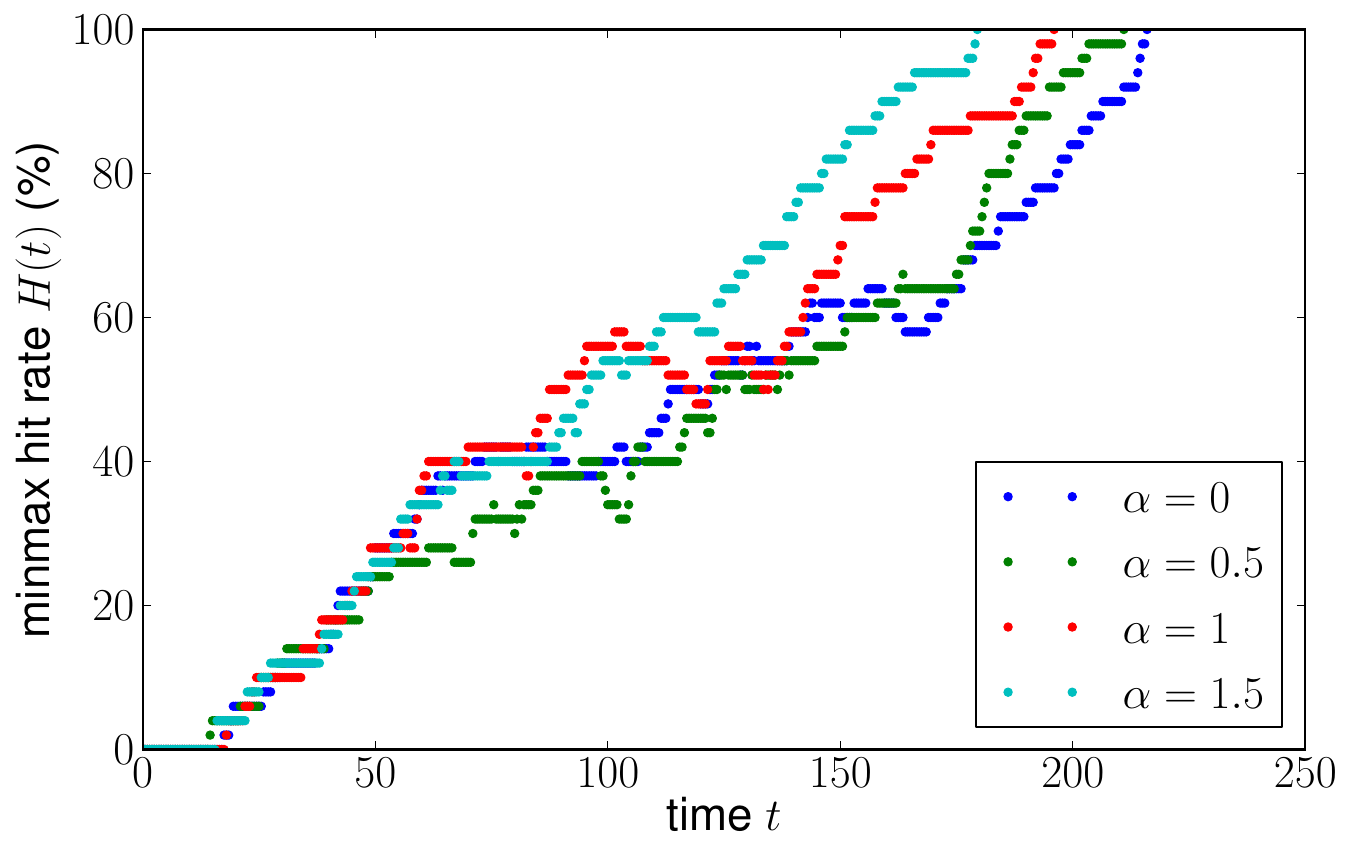}
\caption{Minmax hitrate versus time for different power law distributions.}
\label{fig:dist}
%\vspace{0.15in} %<- just to align the two figures, remove after caption added
\end{minipage}
\end{figure*}

We empirically study our algorithm $\textsl{GoWithTheWinner}$ by building a simulator. Each user is implemented as a Poisson arrival sequence with unit rate.  We use $n_u = 1000$ users. To simulate varying numbers of servers, users, and applications, we also varied $n_s$  and $n_a$  such that $1 \leq n_u/n_c, n_u/n_s \leq 100$. We also simulate a range of values for the spread $1 \leq \sigma \leq 6$, and sliding window size $1 \leq \tau \leq 20$. Each server implements an LRU application replacement policy of size $\kappa \geq 2$. The applications are requested by users using the power law distribution of Equation~\ref{eq:powerlaw} with $\alpha = 0.65$ to model realistic content popularity \cite{Breslau:1999} \cite{Gill:2007}. However, we also vary $\alpha$ from $0$ (uniform distribution) to $1.5$ in some of our simulations.

The system is said to have {\em converged\/} when all users have decided on a single server from their set of candidate servers. There are two complementary metrics that relate to convergence.  {\em Failure rate} is the probability that the system converged to a non-optimal state where there exists servers that are overbooked, resulting in some users  incurring application misses after convergence occurred. The failure rate is measured by performing the simulation multiple times and assessing the goodness of the converged state. {\em Convergence time} is the time it takes for the system to converge provided that it converged to an optimal state.

\subsection{Speed of convergence}

%\begin{figure*}[!t]
%\centering
%\begin{minipage}[htp]{0.32\textwidth}
%\centering
%\includegraphics[width=1\textwidth]{Pics/tau_assign.pdf}
%\vspace{-0.4cm}%
%\caption{THis is figure xx}\label{fig:yyy}
%\end{minipage}
%\begin{minipage}[htp]{0.32\textwidth}
%\centering
%\includegraphics[width=1\textwidth]{Pics/tau_assign.pdf}
%\vspace{-0.4cm}%
%\caption{This is figure yy.}\label{fig:yy}
%\end{minipage}
%\vspace{-0.6cm}%
%\end{figure*}

Figure \ref{fig:assign} shows how the fraction of undecided users decreases over time till it reaches zero, resulting in convergence. Note that users do not decide in the first $\tau$ steps, since they must wait at least that long to accumulate a window of $\tau$ application hits. However, once the first $\tau$ steps complete,  the decrease in the number of undecided users is fast as users discover that at least one of their two randomly chosen candidate servers have less load. The rate of decrease in undecided users slows down again towards the end, as pockets of users who experience cache contention in {\em both} of their server choices require multiple iterations to resolve.

In this simulation, we keep the number of users $n_u = 1000$ but vary the number of servers $n_s$ to achieve different values for $n_u/n_s$. Note that for a fair comparison, we keep the system-wide load  the same.  Load $l$ is a measure of cache contention in the network and is naturally defined as the ratio of the numbers of users in the system and total serving capacity that is available in the system. That is, $l  \stackrel{\Delta}{=} n_u/ (\kappa \cdot n_s)$.
For all three setting of Figure~\ref{fig:assign}, we keep load $l = 0.5$. The figure shows that with fewer (but larger) servers ($n_u/n_s$ is larger) the convergence time is faster, because having server capacity in a few larger servers provides a larger application hit rate than having the same capacity in several smaller servers. Similar performance gains are also found in the context of web caching and parallel jobs scheduling \cite{Sparrow}. The convergence times are plotted explicitly in 
Figure~\ref{fig:utos} for a greater range of  user-to-server ratios. As $n_u/n_s$ increases from $1$ to $40$, convergence time decreases. The decrease in convergence times are not significant beyond $n_u/n_s \geq 40$.

\subsection{Impact of sliding window $\tau$}

%
%\begin{figure*}[t]
%  \centering
%  \subfigure[$\alpha=0.65$, $n_u/n_s=1$]{
%    \includegraphics[width=0.45\textwidth]{Pics/}
%  }
%  \subfigure[$\alpha=0.65$, $n_u/n_s=20$]{
%    \includegraphics[width=0.45\textwidth]{Pics/}
% }
%\caption{Generally, as $\tau$ increases, convergence time increases but  failure rate decreases. However, when there are fewer but larger servers ($n_u/n_s$ = 20) the failure was zero even for a smaller sliding window of $\tau=5$.}
%\label{fig:tau}
%\end{figure*}

The sliding window $\tau$ is the number of recent requests used by algorithm $\textsl{GoWithTheWinner}$ to estimate the hit rate. As shown in Figure ~\ref{fig:tau}, there is a natural tradeoff between convergence time and failure rate. When $\tau$ increases, the users take longer to converge, as they require a $100\%$ hit rate in a larger sliding window. However, waiting for a longer period also makes their decisions more robust. That is, a user is  less likely to choose an overbooked server, since an overbooked server is less likely to provide a string of $\tau$ application hits for large $\tau$. In our simulations with many smaller caches ($n_u/n_s = 1$), when $\tau \leq 4$, users made quick choices based on a smaller sliding window. But, this resulted in the system converging to a non-optimal state 100\% of the time. As $\tau$ further increases, the failure rate decreased. The value of $\tau = 11$ is a suitable sweet spot as it  results in the  smallest  convergence time for a zero failure rate. However, for fewer but larger servers ($n_s/n_u = 20$), all selections of window size $\tau$ (thus the small values like $\tau=5$) yielded a $0\%$ failure rate, while the convergence time still increases as the window size gets larger.

%\subsection{Impact of use-to-server ratio}
%\begin{figure}[p]
%\centering
%\includegraphics[width=0.45\textwidth]{Pics/}
%\caption{($\alpha=0.65$,spread $\sigma=2$, $\tau=20$)}
%\label{fig:utos}
%\end{figure}

\subsection{Impact of spread $\sigma$}
As shown in Theorems~\ref{thm:hitrate} and~\ref{thm:onechoice}, a spread of $\sigma \geq 2$  is required for the system to converge to an optimal solution, while a spread of $\sigma=1$ is insufficient. As predicted by our analysis, our simulations did not converge to an optimal state with $\sigma = 1$. Figure \ref{fig:sigma} shows the  convergence time as a function of spread, for  $\sigma \geq 2$. 

%\begin{figure}[ph]
%\centering
%\includegraphics[width=0.45\textwidth]{Pics/}
%\caption{There is a very small incremental benefit in using $\sigma = 3$ instead of  $2$, though higher values of $\sigma > 3$ only increased the convergence time. ($\alpha=0.65, n_u/n_s = 1, \tau=20,\kappa=2.$)}
%\label{fig:sigma}
%\end{figure}

As  $\sigma$ increases, there are two opposing factors that impact the convergence time. The first factor is that  as $\sigma$ increases, each user has more choices and the user is more likely to find a suitable server with less load. On the other hand, an increase in $\sigma$ also increases the total number of initial requests in the system that equals $\sigma n_u$. Thus, the system starts out in a state where servers have greater average load when $\sigma$ is larger. These opposing forces result in a very small incremental benefit when using $\sigma = 3$ instead of $2$, though the higher values of $\sigma > 3$ showed no benefit as convergence time increases with $\sigma$ increases.

We established the ``power of two random choices'' phenomenon where two or more random server choices yield superior results to having just one. It is intriguing to ask {\em what percentage of users need two choices} to reap the benefits of multiple choices? Consider a mix of users, some with two random choices and others with just one. Let  $\sigma_{avg}$, $1 \leq \sigma_{avg} \leq 2$, denote the average value of the spread among the users.

\comment{
\begin{figure}[ht]
\centering
\includegraphics[width=0.4\textwidth]{zipfmix.pdf}
\caption{Order statistics of the hit rate of the user population. ($\alpha=0.65, n_u/n_s = 1, \tau=10,\kappa=2.$)}
\label{fig:mix}
\end{figure}
}

In Figure \ref{fig:mix}, we show different order statistics of the  hit rate as a function of $\sigma_{avg}$. Specifically, we plot the minimum value, $1^{st}$-percentile, $5^{th}$- percentile and the median ($50^{th}$-percentile) of the hit rates of the users after running the system for a long enough period of 200 time units. As our theory predicts, when $\sigma_{avg} = 2$, the minimum and all the order statistics converge to $100\%$, as all users converge to a $100\%$ hit rate. Further, if we are interested in only the median user, any value of the spread is sufficient to guarantee that $50\%$ of the users obtain a $100\%$ hit rate. Perhaps the most interesting phenomena is that if $\sigma_{avg} = 1.7$, i.e., $70\%$ of the users have two choices and the rest have one choice, the $5^{th}$-percentile converges to $100\%$, i.e., all but $5\%$ of the users experience a $100\%$ hit rate. For a higher value of $\sigma_{avg} = 1.9$, the $1^{st}$-percentile converges to $100\%$, i.e., all but the $1\%$ of the users experience a $100\%$ hit rate. This result shows that our algorithm still provides benefits even if only {\em some} users  have multiple random choices of servers available to them.

\comment{
\subsection{Impact of load}
\begin{figure}[t]
\centering
\includegraphics[width=0.4\textwidth]{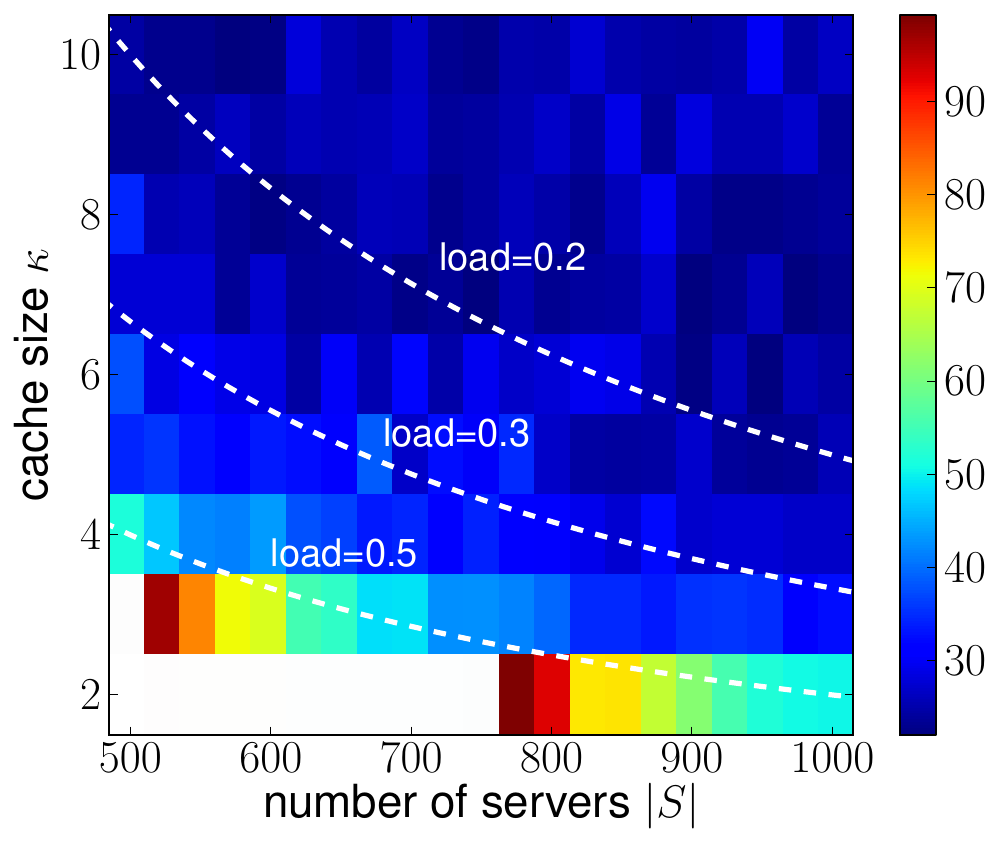}
\caption{Heatmap shows that convergence is faster if the cache size ($\kappa$) and/or number of servers ($n_s$) is larger for same number of users ($n_u$). ($\alpha=0.65, \tau=10,\sigma=4.$)}
\label{fig:heat}
\end{figure}
Now we look at the performance of our algorithm under varying load conditions. Recall load $l  = n_u/ (\kappa \cdot n_s)$.
Figure \ref{fig:heat} shows convergence time as a heatmap with different
values of cache size $\kappa$ and number of servers $n_s$. The convergence time varies from lower (blue) to higher (red) values. The figure also shows contours where the load is fixed. The white block in the bottom left corner represents 
high load closer to $1$ where our algorithm did not converge within a reasonable amount of time. It is easy to see that convergence time is faster when either the number of servers or the cache size increases. Figure~\ref{fig:load} shows this trend more explicitly as the convergence timße increases with increasing load.
The reason for this behavior is that as load increases there is more contention for the cache resources requiring more time for the algorithm to find a suitable server for each user.
}

\subsection{Impact of demand distribution}

%\begin{figure}[t]
%\centering
%\includegraphics[width=0.4\textwidth]{Pics/zipf_load4}
%\caption{As load increases, contention for cache resources increase, requiring more time for users to converge to a good server. ($\alpha=0.65, \tau=10,\sigma=4.$)}
%\label{fig:load}
%%\vspace{-0.3in}
%\end{figure}

%\begin{figure}[t]
%\centering
%\includegraphics[width=0.4\textwidth]{Pics/newdist.pdf}
%\caption{Minmax hitrate versus time for different power law distributions.}
%\label{fig:dist}
%%\vspace{-0.2in}
%\end{figure}
We now study  how hit rate changes with the exponent $\alpha$ in the power law distribution of Equation~\ref{eq:powerlaw}.  Note that the distribution is uniform when $\alpha = 0$ and is the harmonic distribution when $\alpha = 1$.  As $\alpha$ increases, since the tails fall as a power of $\alpha$,  the distribution gets more and more skewed towards applications with a smaller rank. In  Figure \ref{fig:dist}, we plot the minmax hitrate over  time for different $\alpha$, where we see that a larger $\alpha$ leads to faster convergence. The reason is that as the popularity distribution gets more skewed, a larger fraction of users can share the same VMs for popular applications, leading to better hit rate and faster convergence. 
Thus, the uniform application popularity distribution ($\alpha=0$)  is the worst case and the algorithm converges faster for the distributions that tend to occur more commonly in practice. Providing theoretical support for this empirical result by analyzing the convergence time to show faster convergence for larger $\alpha$ is a topic for future work.

To summarize our results from empirical evaluation: We establish an inverse relationship between the length of the history used for performance evaluation (denoted by  $\tau$) and the failure rate defined as the probability that the system converges to a non-optimal state. We show that as $\tau$ increases the convergence time increases, but the failure rate decreases. We  also empirically evaluate the impact of the number of choices of candidate servers. We show that two or more random choices are required for all users to receive a $100\%$ application hit rate. Though even if only  70\% of the users  make two choices,  it is sufficient for $95\%$  of the users to receive a $100\%$ application hit rate.

\section{Related work}
Server selection algorithms have a rich history of both research and actual implementations over the past two decades. Several server selection algorithms have been proposed and empirically evaluated, including client-side algorithms that use historical performance feedback using probes \cite{dykes2000empirical,crovella1995dynamic}.  Server selection has also been studied in a variety of contexts, such as  the web \cite{crovella1995dynamic,sayal1998selection}, video streaming\cite{torres2011dissecting}, and cloud services\cite{wendell2010donar}. Our work is distinguished from the prior literature in that we theoretically model the ``Go-With-The-Winner'' paradigm that is common to many proposed and implemented client-side server selection algorithms. Our work is the first formal study of the efficacy and convergence of such algorithms.

In terms of analytical techniques, our work is closely related to  prior work on balls-into-bins games where the witness tree technique was first utilized \cite{mitzenmacherRS2001}. Witness trees were subsequently used to analyze load balancing algorithms, and circuit-switching algorithms \cite{cole1998randomized}. However, our setting involves additional complexity requiring novel analysis due to the fact that users can share a single cached copy of an object and the hit rate feedback is only a probabilistic indicator of server overload. Also,  our work shows that the ``power of two random choices'' phenomenon  applies in the context of content delivery, a phenomenon known to hold in other contexts such as  balls-into-bins, load balancing, relay allocation for services like Skype \cite{Nguyen:2008}, and circuit switching in interconnection networks \cite{mitzenmacherRS2001}. 
\comment{
There are also parallels between our work on bit rate maximization for video delivery and the recent work on throughput maximization in multi-path communication \cite{Peter:2007}. Similar in spirit to our work, the work on multi-path communication concludes that coordinated rate control over multiple randomly selected paths provides optimal throughput and is significant better than uncoordinated control where rates are determined independently over each single path. However, our work also evolves techniques for determining the convergence rate that is potentially applicable in these domains.
 }
 
 \section{Conclusion}
 Our work constitutes the first formal study of the simple ``Go-With-The-WInner'' paradigm in the context of web and video content delivery.  For web (resp., video) delivery, we proposed a simple algorithm where each user randomly chooses two or more candidate servers and selects the server that provided the best hit rate (resp., bit rate). We proved that the algorithm converges quickly to an optimal state where all users receive the best hit rate (resp., bit rate) and no server is overloaded, with high probability.  While we make some assumptions to simplify the theoretical analysis, our simulations evaluate a broader setting that incorporates a range of values for $\tau$ and $\sigma$, varying  content popularity distributions, differing load conditions, and situations where only some users have multiple server choices. Taken together, our work establishes that the simple ``Go-With-The-Winner'' paradigm can provide algorithms that converge quickly to an optimal solution, given a sufficient number of random choices and a sufficiently (but not perfectly) accurate performance feedback.
 
\bibliographystyle{abbrv}
\bibliography{ServerSelectionRefs,new_ref,GoWithTheWinner}
\balance
\newpage
\appendix

%=====================================witness tree proof===========================================
\subsection{Detailed Proof of Theorem \ref{thm:hitrate}}
\begin{proof}
For simplicity, we prove the situation where $\sigma = 2$, i.e., each user initially chooses two random candidate servers in step 1 of the algorithm. The case where $\sigma > 2$ is analogous. 
\comment{Further we will assume that a user $u$ requesting application from an overbooked server $s$  at time $t$,  $0 \leq t \leq T$, has $H_\tau(i,j,t) <100\%$,  as the probability that this assumption is violated at most $1/ n^{\Omega(1)}$ as per Lemma~\ref{lem:overbookhit}.}  Wlog, we also assume  $\kappa$ is at most $O(\log n/ \log\log n)$, which includes the interesting case of  $\kappa$ equal to a constant.  When the server capacity is larger, i.e., if  $\kappa = \Omega(\log n/ \log\log n)$,  there will be no overbooked servers with high probability and the theorem holds trivially. This observation follows from a well-known result that if $n$ balls (i.e., users)  uniformly and randomly select $k$  out of $n$ bins (i.e. servers), then the maximum number of users that select a server is $O(\log n / \log\log n)$ with high probability, when  $k$ is a fixed constant \cite{raab1998balls}. 

In contradiction to the theorem, suppose some user $u$ has not decided on a server  by time $T$. We construct a ``witness tree\footnote{A witness tree is so called as it bears witness to the occurrence of an event such as  a user being undecided.}''  of degree $\kappa + 1$ and depth at least $\rho$, where  $\rho = T/\delta =  \kappa \log\log n/\log (\kappa + 1)$. Each node of the witness tree is a server. Each edge of the witness tree is a user whose two nodes correspond to the two servers chosen by that user.  We show that the existence of an undecided user in time step $T$ is unlikely by enumerating all possible witness trees and showing that the occurrence of any such  witness tree is unlikely.  The proof proceeds in the following three steps.

\noindent{\bf (1) Constructing a witness tree.}  If algorithm $\mathrm{MaxHitRate}$ has not converged to the optimal state at time $T$, then there exists a user (say $u_1$)  and a server $s$ such that $H_\tau(u_1, s, T) < 100\%$, since user $u_1$ has not yet found a server with a $100\%$ hit rate. We make server $s$ the root of the witness tree.

We find children for the root $s$ to extend the witness tree as follows. Since $H_\tau(u_1, s, T) < 100\%$, by Lemma \ref{lem:timefortau} we know server $s$ is overbooked at time $t' = t - \delta $, i.e., there are at least $\kappa + 1$  users requesting server $s$ for $\kappa + 1$ distinct applications at time $t' $. Let $u_1,\ldots,u_{\kappa+1}$ be the users who sent requests to server $s$ at time $t' $. Wlog, assume that the users $\{u_i\}$ are ordered in ascending order of their IDs.  By Lemma~\ref{lem:overbookhit}, we know that the probability of a user deciding on an overbooked server is small, i.e., at most $1/n^{\Omega(1)}$. Thus, with high probability, users $u_1,\ldots,u_{\kappa+1}$  are undecided at time $t'$ since server $s$ is \textsl{overbooked}. Let $s_i$ be the other server choice associated with user $u_i$ (one of the choices is server $s$).  We extend the witness tree by creating $\kappa +1$ children for the root $s$, one corresponding to each server $s_i$.  Note that for each of the servers $s_i$ we know that $H(u_i, s_i, t' ) < 100\%$, since otherwise user $u_i$ would have decided on server $s_i$ in time step $t' $. Thus, analogous to how we found children for $s$, we can recursively find $\kappa + 1$ children for each of the servers $s_i$ and grow the witness tree to an additional level.   

Observe that to add an additional level of the witness tree we went from server $s$ at time $T $ to servers $s_i$ at time $t' $, i.e., we went back in time by an amount of $T - t' \leq \delta $. If we continue the same process, we can construct a witness tree that is a $(\kappa + 1)$-ary tree of depth $T/\delta = \rho$.

\noindent{\bf (2) Pruning the witness tree.}
If the nodes of the witness tree are guaranteed to represent distinct servers, proving our probabilistic bound is relatively easy. The reason is that if the servers are unique then the users that represent edges of the tree are unique as well. Therefore the probabilistic choices that each user makes is independent, making it easy to evaluate the probability of occurrence of the tree. However, it may not be the case that the servers in the witness tree constructed above are unique, leading to dependent choices that are hard to resolve. Thus, we create a {\em pruned witness tree} by removing repeated servers from the  original (unpruned) witness tree. 

We prune the witness tree by visiting the nodes of the witness tree iteratively in breadth-first  search order starting at the root. As we perform breadth-first search (BFS), we remove (i.e., prune) some nodes of the tree and the subtrees rooted  at these nodes. What is left after this process is the pruned witness tree. We start by visiting the root. In each iteration, we visit the next  node $v$ in BFS order that has not been pruned. Let $\beta(v)$ denote the nodes visited {\it before\/} $v$. If $v$ represents a server that is different from the servers represented by nodes in $\beta(v)$, we do nothing.
 Otherwise, prune all nodes in the subtree rooted at $v$. Then, mark the edge from $v$ to its parent  as a {\em pruning edge}. (Note that the pruning edges are not part of the pruned witness tree.) The procedure continues until either no more nodes remain to be visited or there are $\kappa + 1$ pruning edges. In the latter case, we apply a final pruning by removing all nodes that are yet to be visited, though this step does not produce any more pruning edges. This process results in a pruned witness and a set of $p$ (say) pruning edges.  

 Note that each pruning edge corresponds to a user who we will call a {\em pruned user}. We now make a pass through the pruning edges to select a set $P$ of unique pruned users. Initially, $P$ is set to $\emptyset$. We visit the pruning edges in BFS order and for each pruning edge $(u,v)$ we add the user corresponding to $(u,v)$ to $P$, if this user is distinct from all users currently in $P$ and if $|P|<  \lceil p /2 \rceil $, where $p$ is the total number of pruning edges. We stop adding pruned users to set $P$ when we have exactly $\lceil p/2 \rceil$  users. Note that since a user who made server choices of $u$ and $v$ can appear at most twice as a pruned edge, once with $u$ in the pruned witness tree and once with $v$ in the pruned witness tree. Thus, we are guaranteed to find $\lceil p/2 \rceil$ distinct pruned users. 
 
After the pruning process, we are left with a pruned witness tree with nodes representing distinct servers and edges representing distinct users. In addition, we have a set $P$ of $\lceil p/2 \rceil$ distinct pruned users, where $p$ is the number of pruning edges. 

\noindent{\bf (3)  Bounding the probability of pruned witness trees.} 
We enumerate possible witness trees and bound their probability using the union bound. Observe that since the (unpruned) witness tree is a $(\kappa + 1)$-ary tree of depth $\rho$, the number of nodes in the witness tree is 
\begin{equation}
m = \sum_{0 \leq i \leq \rho} (\kappa + 1)^i =  \frac{(\kappa+1)^{\rho + 1} - 1}{\kappa} \leq 2 \log^2 n, \label{eq:mbound}
\end{equation}
since $\rho = 2 \log\log n / \log (\kappa + 1)$ and hence $(\kappa + 1)^\rho = \log^2 n$.

{\em Ways of choosing the shape of the pruned witness tree}:
The shape of the pruned witness tree is determined by choosing the $p$ pruning edges of the tree. The number of ways of selecting the $p$ pruning edges is at most ${m  \choose p} \leq m^{p},$ since there are at most $m$ edges in the (unpruned) witness tree. \\
{\em Ways of choosing users and
servers for the nodes and edges of the pruned witness tree}: The enumeration
proceeds by considering the nodes in BFS order.  The number of ways of
choosing the server associated with the root is $n$. Consider the $i^{th}$ internal node $v_i$ of the pruned witness tree
whose server has already been chosen to be $s_i$. Let $v_i$ have $\mu_i$
children. There are at most ${ {n} \choose {\mu_i} }$ ways of
choosing distinct servers for each of the $\mu_i$ children of
$v_i$.  Also, since there are at most $n$ users in the system
at any point in time, the number of ways to choose 
distinct users for the
$\mu_i$ edges incident on $v_i$ is also at most
${ {n} \choose {\mu_i}}$. There are ${\mu_i}!$
ways of pairing the users and the servers.
Further, the probability that a chosen user
chooses server $s_i$ corresponding to node $v_i$ and a specific one of
$\mu_i$ servers chosen above for $v_i$'s children is $$
\frac{1}{{n \choose 2}} = \frac{2}{n (n - 1)},$$ since each set of two servers is equally likely to be chosen in step 1 of the algorithm. Further, note that each of the $\mu_i$ users chose $\mu_i$ distinct applications and let the probability of occurrence of this event be $Uniq(n_a, \mu_i)$. This uniqueness probability has been studied in the context of collision-resistant hashing and it is known \cite{bellare2004hash} that
$Uniq(n_a, \mu_i)$ is largest when the content popularity distribution is the uniform distribution ($\alpha = 0$) and progressively becomes smaller as $\alpha$ increases. In particular, 
$Uniq(n_a, \mu_i) \leq e^{-\Theta(\mu^2_i/n_a)} < 1.$
Putting it together, the number of ways of choosing a distinct
server for each of the $\mu_i$ children of $v_i$, choosing a distinct
user for each of the $\mu_i$ edges incident on $v_i$, choosing a distinct application for each user, and  multiplying by the appropriate probability is at most 
\begin{equation}
\label{eq:expbd}
 {n \choose \mu_i} \cdot {n \choose \mu_i} \cdot {\mu_i}! \cdot \left (
\frac{2}{n (n - 1)}\right)^{\mu_i} \cdot Uniq(n_a, \mu_i) \leq \frac{2^{\mu_i}}{
\mu_i!}, 
\end{equation} 
provided $\mu_i > 1$.
Let $m'$ be the number of internal nodes $v_i$ in the pruned witness tree
such that $\mu_i = \kappa + 1$. Using the bound in Equation~\ref{eq:expbd}
for only these $m'$ nodes, the number of ways of choosing the users
and servers for the nodes and edges respectively of the pruned witness tree
weighted by the probability that these choices occurred is at most $$n \cdot (2^{\kappa+1} / (\kappa+1)!)^{m'}.$$
{\em Ways of choosing the pruned users in $P$}: Recall that there are $\lceil p/2 \rceil$ distinct pruned users  in $P$. The number of ways of choosing the users in $P$ is at
most ${n^{\lceil p/2 \rceil}}$, since at any time step there are at most $n$ users in the system to choose from. Note that a pruned user has both of its
server choices in the pruned witness tree. Therefore, the probability that a given user is a pruned user is at most $ m^2/
n^2.$ Thus the number of choices for the
$\lceil p/2 \rceil$ pruned users in $P$ weighted by the probability
that these pruned users occurred is at most
$${n^{\lceil p/2\rceil}} \cdot (m^2 / n^2)^{\lceil p/2
\rceil}
\leq (m^2/ n)^{\lceil p/2 \rceil}.$$
{\em Bringing it all together}: The probability that
there exists a pruned
witness tree with $p$ pruning edges, and $m'$ internal nodes with $(\kappa+1)$ children each, is at most
\begin{eqnarray}
m^ {p} \cdot  n \cdot (2^{\kappa+1} / (\kappa+1)!)^{m'} \cdot  (m^2/ n)^{\lceil p/2 \rceil} \nonumber \\
\leq n \cdot
(2^{\kappa+1} / (\kappa+1)!)^{m'}  \cdot (m^4 /n)^{\lceil p/2 \rceil} \nonumber  \\
\leq n \cdot (2e / (\kappa+1))^{m' (\kappa + 1)}  \cdot (m^4 /n)^{\lceil p/2 \rceil}, 
 \label{eq:bd2}
\end{eqnarray}
since $(\kappa + 1)! \geq ((\kappa + 1)/e)^{\kappa + 1}$.
There are two possible cases depending on how the pruning process terminates. If the number of pruning edges, $p$, equals $\kappa + 1$ then the third term of Equation~\ref{eq:bd2} is 
$$(m^4 /n)^{\lceil p/2 \rceil} \leq (16\log^8 n /n)^{\lceil (\kappa + 1)/2 \rceil} \leq 1/ n^{\Omega(1)},$$
using Equation~\ref{eq:mbound} and assuming that cache size $\kappa$ is at least  a suitably large constant.
Alternately, if the pruning process terminates with fewer than $\kappa + 1$ pruning edges, it must be that at least one of the $\kappa + 1$ subtrees rooted at the children of the root $s$ of the 
(unpruned) witness tree has no pruning edge. Thus, the number of internal nodes $m'$ of the pruned witness tree with $(\kappa + 1)$ children  each is bounded as follows:
$$m' = \sum_{0 \leq i < \rho -1 } (\kappa + 1)^i \geq (\kappa + 1)^{\rho - 2} \geq \log^2 n / (\kappa + 1)^2,$$ 
as $(\kappa + 1)^\rho = \log^2 n$. Thus, the second term of Equation~\ref{eq:bd2} is 
$$(2e / (\kappa+1))^{m' (\kappa + 1)}\leq (2e / (\kappa+1))^{\log^2 n/ (\kappa + 1)} \leq 1/n^{\Omega(1)},$$ 
assuming $\kappa > 2e -1$  but is at most $O(\log n/\log\log n)$.
Thus, in either case, the bound in
Equation~\ref{eq:bd2} is $1/n^{\Omega(1)}$. Further, since there
are at most
$m$ values for $p$, the total probability of a pruned
witness tree is at most
$m \cdot 1/n^{\Omega(1)}$ which is
$1/n^{\Omega(1)}$. This completes the proof of the theorem. 
\end{proof}

%=====================================\sigma=1 proof for GoWithTheWinner===========================================

\subsection{Proof of Theorem~\ref{thm:onechoice}}

\begin{proof}
From the classical analysis of throwing $n$ balls into $n$ bins \cite{mitzenmacherRS2001}, we know that there exist a subset $U' \subseteq U$ such that  $|U'| = \Theta(\log n /\log\log n)$ and all users in $U'$ have chosen a single server $s$, with high probability. Now we show that some user in $U'$ must have a small hit rate with high probability. Let $C'$ represent the set of all objects accessed by all users in $S'$. The probability that $|C'| \leq \kappa w(n)$ can be upper bounded as follows, where $w(n)$ is an arbitrarily slowly growing function of $n$. The number of ways of picking $C'$ objects from a set $C$ of $n$ objects is at most $n^{|C'|}$. The probability that a user  in $U'$ will pick an object in $C'$ can be upper bounded by the probability that a user chooses one of the $|C'|$ most popular objects. Thus the probability that a user in $U'$ picks an object in $C'$  is at most $\mathcal{H}(|C'|,\alpha)/\mathcal{H}(n,\alpha) = \Theta( (|C'| / n)^{1-\alpha})$, where $\mathcal{H}(i,\alpha)$ is the $i^{th}$ generalized harmonic number and $\mathcal{H}(i,\alpha)=\Theta(i^{1 -\alpha})$.Thus, the probability that all users in $U'$ pick objects in $C'$  is at most $\Theta((|C'| /n)^{(1 - \alpha)|U'|})$.  Therefore, the probability that $|C'| \leq \kappa w(n)$ is at most
\begin{eqnarray}
& n^{|C'|}  \cdot \Theta((|C'| /n)^{(1 - \alpha)|U'|}) \nonumber\\
 \leq & n^{ \kappa w(n)} \cdot (\kappa w(n)/n)^{\Theta(\log n/\log\log n)}  = o(1) \nonumber
\end{eqnarray}
Thus, probability that $|C'| \leq \kappa w(n)$ is small and hence $|C'| > \kappa w(n)$, with high probability. Since the minmax hit rate $H(t)$ is at most $\kappa/|C'|$ which is at most $1/w(n)$, $H(t)$ tends to zero with high probability.
\end{proof}

%=====================================n_u>n_s proof===========================================

\subsection{Proof of Theorem \ref{lem:nu>ns}}
\begin{proof}
We prove the lemma using Chernoff Bound. \\
Recall that there are $n_s$ servers and $n_u$ users. Let $X_{ij}$ be the binary indicator of user $j$ selects server $i$. Because users selects servers uniformly at random, $X_{ij}\sim \text{Bernoulli}(1/n_s)$. Thus, $Y_i=\sum_{j=1}^{n_u}X_{ij}, \E{Y_i} = \frac{n_u}{n_s}$.
\begin{enumerate}
\item [(1)] When $n_s=n, n_u=n\log n$, we have $\E{Y_i} = \log n$. For any $\delta>0$, we have for the maximum load over servers $Y$,
\begin{align*}
\prob{Y>(1+\delta)\log n} &\leq \sum_{i=1}^{n_s} \prob{Y_i>(1+\delta)\log n} \\
& = n \prob{Y_i>(1+\delta)\log n} \\
& \leq n e^{-\log n \delta^2/3} \\
& = n^{1-\delta^2/3},
\end{align*}
which equals to $n^{-\Theta(1)}$ if $\delta>\sqrt{3}$. Thus, with high probability, $Y\leq (1+\sqrt{3})\log n$. And if the server capacity $\kappa\geq (1+\sqrt{3})\log n$, all users will have hit rate $1$.
\item [(2)] When $n_s=n, n_u=n^{\alpha}, \alpha>1$, following the same argument as in (1), for any $\delta>0$,
\begin{align*}
\prob{Y>(1+\delta)n^{\alpha-1}} &\leq \sum_{i=1}^{n_s} \prob{Y_i>(1+\delta)n^{\alpha-1}} \\
& = n \prob{Y_i>(1+\delta)n^{\alpha-1}} \\
& \leq n e^{-\log n \frac{n^{\alpha-1}\delta^2}{3\log n}}\\
& = n^{1-\frac{n^{\alpha-1}\delta^2}{3\log n}},
\end{align*}
which equals to $n^{-\omega(1)}$. Thus the maximum load on all servers is $Y=n^{\alpha-1}=\frac{n_u}{n_s}$. And as long as the capacity of servers $\kappa\geq n^{\alpha-1}=\frac{n_u}{n_s}$, all users will get hit rate $1$.
\end{enumerate}
\end{proof}

%=====================================BitRate witness tree proof===========================================

\subsection{Proof for Theorem \ref{thm:bitrate}}

The proof is similar to that of Theorem~\ref{thm:hitrate} in that we create a witness tree, prune  it, and then show that a pruned witness tree is unlikely. However, Algorithm MaxBitRate differs with Algorithm GoWithTheWinner differs in that it's a synchronous algorithm that all users make requests in synchronization and the algorithm executes in discrete time steps rather than continuous time scale. Thus before the proof, we need the following lemmas to assist the formal proof.

\begin{lemma}
\label{lem:cachemiss}
For any time $t > 0$, if user $u$ receives a application miss from server $s$ at some time $t$,  then server $s$ is overbooked at time $t - 1$.
\end{lemma}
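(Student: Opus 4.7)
The plan is to trace the miss event at time $t$ backward to a load condition at time $t-1$ using the monotone structure of Algorithm $\textrm{MaxBitRate}$. First I would unpack what a miss means in the bitrate model: since each server evenly divides its aggregate capacity of $\kappa$ units among all users currently streaming from it, user $u$ observes $B(u,s,t) < 1$ if and only if strictly more than $\kappa$ users hold a connection to $s$ at time $t$. By the definition of overbooked transported to the bitrate setting, this says that $s$ is overbooked at time $t$ itself.

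The remaining step, and the whole content of the lemma, is to push this overbooked condition one step back in time. For this I would invoke a simple monotonicity property of the algorithm: every user's candidate set $S_u$ is only ever shrunk (when the user decides on a server via $S_u \leftarrow \{s\}$) and is never expanded. Hence the set of users streaming from any fixed server $s$ is nonincreasing in $t$ — between time $t-1$ and time $t$, a user who was not already at $s$ cannot appear at $s$, whereas a user who decides on some $s' \neq s$ drops its connection to $s$. Combining this with the preceding observation, the number of users at $s$ at time $t-1$ is at least the number at time $t$, which exceeds $\kappa$; so $s$ is overbooked at time $t-1$.

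I do not anticipate any serious obstacle: the argument is purely structural and the algorithm's explicit update rule makes the monotonicity transparent. The one subtle point worth stating cleanly is that when a user decides on $s$ itself, that user was already connected to $s$ at time $t-1$ (since $s \in S_u$ is required for the decision to be triggered at $s$), so such a decision cannot manufacture a spurious new arrival and the nonincreasing-load property is preserved across the decision event. The condition $t > 0$ in the hypothesis is needed only so that the index $t-1$ is well defined.
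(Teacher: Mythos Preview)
Your argument is correct, but it is not the route the paper takes. The paper's proof reads the lemma through a cache-eviction lens: since $u$ already requested $c_u$ from $s$ at time $t-1$, the item $c_u$ was placed in $s$'s cache then; a miss at time $t$ therefore forces at least $\kappa$ \emph{other} distinct items to have been requested from $s$ at time $t-1$ (so as to evict $c_u$ under the replacement policy), giving $\kappa+1$ distinct content items at $s$ at time $t-1$, i.e.\ overbooked in the paper's original sense. Your proof instead interprets a miss purely as $B(u,s,t)<1$, deduces that more than $\kappa$ users are connected to $s$ at time $t$, and then uses the monotonicity of the candidate sets $S_u$ to push this back to $t-1$. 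The paper's argument buys the literal ``more than $\kappa$ distinct content items'' conclusion, matching the definition of overbooked given earlier; your argument buys the bandwidth-relevant conclusion ``more than $\kappa$ users,'' which is the natural notion in the bitrate model and is in fact what the downstream witness-tree construction actually needs (namely $\kappa+1$ distinct \emph{users}, hence $\kappa+1$ edges). Your monotonicity route is cleaner and avoids importing replacement-policy semantics into a model where the server simply splits bandwidth; the paper's route stays closer to the formal definition of overbooked but leans on cache mechanics that the bitrate section does not otherwise invoke.
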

\begin{proof}
If user $u$ requested a service $c_u$ from server $s$ at time $t$, it must have also requested $c_u$ from server $s$ at time $t - 1$.  As soon as the request  for $c_u$ was processed at time $t -1$, it was placed server $s$. There must have been $\kappa$ other requests for distinct services that caused the service replacement policy to evict $c_u$,  resulting in the application miss at time $t$. Thus, at least $\kappa + 1$ distinct services were requested from server $s$ at time $t - 1$, i.e., server $s$ is overbooked at time $t -1$.
\end{proof}

To prove convergence, we choose the {\em sliding window size} $\tau = c \kappa \log n$, for a suitably  large positive constant $c$. Further, consider an initial time interval from time zero to time $T$ that consists of  $\rho$ intervals of length $\tau + 1$ each, where $\rho = 2 \log\log n/ \log (\kappa + 1)$.  Thus,  $T = \rho \cdot (\tau + 1) =  O(\kappa \log n \log\log n/\log (\kappa + 1))$. 
\begin{lemma}\label{lem:overbookhit2}
The probability that some user $u \in U$ decides on an overbooked server $s \in S$ at some time $t$, $0 \leq t \leq T$,  is at most  $1/n^{\Omega(1)}$. 
\end{lemma}
\begin{proof}
Suppose user $u$ decides on an overbooked server $s$ at time $t$. Then, it must be the case that 
$H_\tau(u,s,t) = 100\%$. Thus, server $s$ provided a application hit to user $u$ in every time $t'$, $t - \tau <  t' \leq  t$. Recall that each server serves simultaneous requests by first batching the requests according to the requested applications, i.e., each batch contains requests for the same application, and then serving each batch in random order. Since server $s$ is overbooked at time $t$, it must have been overbooked during all the previous time steps. An overbooked server $s$ has at least $\kappa + 1$ distinct applications being requested, i.e., it has at least $\kappa + 1$ batches of requests. The request made by user $u$ will receive a application miss if the batch in which it belongs to is $\kappa + 1$ or higher in the random ordering. Thus, the probability that user $u$ receives a application miss from the overbooked server $s$ in any time step $t ' \leq t$ is at least $1 / (\kappa + 1)$. Since $H_\tau(u,s,t)$ is $100\%$ only if there is no application miss at any time $t'$, $t - \tau<  t' \leq  t$, the probability of such an occurrence is at most
$$
\left( 1- \frac{1}{\kappa+1}\right)^\tau = \left(1- \frac{1}{\kappa +1}\right)^{c \kappa \log n} = \frac{1}{n^{\Omega(1)}},
$$
since $\tau = c \kappa \log n$.  Using the union bound and choosing a suitably large constant $c$, the probability that there exists a user $u \in U$ who decides on an overbooked server $s$ at some $t$, $0 \leq t \leq T$,  is at most
$$ n \times n  \times (T + 1) \times \frac{1}{n^{\Omega(1)}} \leq \frac{1}{n^{\Omega(1)}},$$
since there are $n$ users, at most $n$ overbooked servers, and $T + 1 = O(\kappa \log n \log\log n/\log (\kappa+1))$ time steps. 
\end{proof}

Now with the two lemmas above, we can prove Theorem~\ref{thm:bitrate} as the following.

\begin{proof}
For simplicity, we prove the situation where $\sigma = 2$, i.e., each user initially chooses two random candidate servers in step 1 of the algorithm. The case where $\sigma > 2$ is analogous. 
\comment{Further we will assume that a user $u$ requesting application from an overbooked server $s$  at time $t$,  $0 \leq t \leq T$, has $H_\tau(i,j,t) <100\%$,  as the probability that this assumption is violated at most $1/ n^{\Omega(1)}$ as per Lemma~\ref{lem:overbookhit}.}  Wlog, we also assume  $\kappa$ is at most $O(\log n/ \log\log n)$, which includes the practically interesting case of  $\kappa$ equal to a constant.  When the server capacity is larger, i.e., if  $\kappa = \Omega(\log n/ \log\log n)$,  there will be no overbooked servers with high probability and the theorem holds trivially. This observation follows from a well-known result that if $n$ balls (i.e., users)  uniformly and randomly select $k$  out of $n$ bins (i.e. servers), then the maximum number of users that select a server is $O(\log n / \log\log n)$ with high probability, when  $k$ is a fixed constant \cite{raab1998balls}. 

In contradiction to the theorem, suppose some user $u$ (say) has not decided on a server  by time $T$. We construct a ``witness tree\footnote{A witness tree is so called as it bears witness to the occurrence of an event such as  a user being undecided.}''  of degree $\kappa + 1$ and depth at least $\rho$, where  $\rho = T/(\tau+1) =  2 \log\log n/\log (\kappa + 1)$. Each node of the witness tree is a server. Each edge of the witness tree is a user whose two nodes correspond to the two servers chosen by that user.  We show that the existence of an undecided user in time step $T$ is unlikely by enumerating all possible witness trees and showing that the occurrence of any such  witness tree is unlikely.  The proof proceeds in the following three steps.

\noindent{\bf (1) Constructing a witness tree.}  If algorithm $\mathrm{MaxHitRate}$ has not converged to the optimal state at time $T$, then there  exists an user (say $u_1$)  and a server $s$ such that $H_\tau(u_1, s, T-1) < 100\%$, since user $u_1$ has not yet found a server with a $100\%$ hit rate. We make server $s$ the root of the witness tree.

We find children for the root $s$ to extend the witness tree as follows. Since $H_\tau(u_1, s, T-1) < 100\%$, there exists a time $t'$, $T  - 1 - \tau < t' \leq T -1$, such that user  $u_1$ received a application miss from server $s$.  By Lemma~\ref{lem:cachemiss},  server $s$ was overbooked at time $t' - 1$, i.e., there are at least $\kappa + 1$  users requesting server $s$ for $\kappa + 1$ distinct applications at time $t' - 1$. Let $u_1,\ldots,u_{\kappa+1}$ be the users who sent requests to server $s$ at time $t' - 1$. Wlog, assume that the users $\{u_i\}$ are ordered in 
ascending order of their IDs.  By Lemma~\ref{lem:overbookhit}, we know that the probability of a user deciding on an overbooked server is small, i.e., at most $1/n^{\Omega(1)}$. Thus, with high probability, users $u_1,\ldots,u_{\kappa+1}$  are undecided at time $t' - 1$ since they made a request to an overbooked server $s$. Let $s_i$ be the other server choice associated with user $u_i$ (one of the choices is server $s$).  We extend the witness tree by creating $\kappa +1$ children for the root $s$, one corresponding to each server $s_i$.  Note that for each of the servers $s_i$ we know that $H(u_i, s_i, t' -2) < 100\%$, since otherwise user $u_i$ would have decided on server $s_i$ in time step $t' -2$. Thus, analogous to how we found children for $s$, we can recursively find $\kappa + 1$ children for each of the servers $s_i$ and grow the witness tree to an additional level.   

Observe that to add an additional level of the witness tree we went from server $s$ at time $T -1$ to servers $s_i$ at time $t' -2$, i.e., we went back in time by an amount of $T - 1 - (t' - 2) \leq \tau + 1$. If we continue the same process, we can construct a witness tree that is a $(\kappa + 1)$-ary tree of depth $T/(\tau + 1) = \rho$.

\noindent{\bf (2) Pruning the witness tree.}
If the nodes of the witness tree are guaranteed to represent distinct servers, proving our probabilistic bound is relatively easy. The reason is that if the servers are unique then the users that represent edges of the tree are unique as well. Therefore the probabilistic choices that each user makes is independent, making it easy to evaluate the probability of occurrence of the tree. However, it may not be the case that the servers in the witness tree constructed above are unique, leading to dependent choices that are hard to resolve. Thus, we create a {\em pruned witness tree} by removing repeated servers from the  original (unpruned) witness tree. 

We prune the witness tree by visiting the nodes of the witness tree iteratively in breadth-first  search order starting at the root. As we perform breadth-first search (BFS), we remove (i.e., prune) some nodes of the tree and the subtrees rooted  at these nodes. What is left after this process is the pruned witness tree. We start by visiting the root. In each iteration, we visit the next  node $v$ in BFS order that has not been pruned. Let $\beta(v)$ denote the nodes visited {\it before\/} $v$. If $v$ represents a server that is different from the servers represented by nodes in $\beta(v)$, we do nothing.
 Otherwise, prune all nodes in the subtree rooted at $v$. Then, mark the edge from $v$ to its parent  as a {\em pruning edge}. (Note that the pruning edges are not part of the pruned witness tree.) The procedure continues until either no more nodes remain to be visited or there are $\kappa + 1$ pruning edges. In the latter case, we apply a final pruning by removing all nodes that are yet to be visited, though this step does not produce any more pruning edges. This process results in a pruned witness and a set of $p$ (say) pruning edges.  

 Note that each pruning edge corresponds to a user who we will call a {\em pruned user}. We now make a pass through the pruning edges to select a set $P$ of unique pruned users. Initially, $P$ is set to $\emptyset$. We visit the pruning edges in BFS order and for each pruning edge $(u,v)$ we add the user corresponding to $(u,v)$ to $P$, if this user is distinct from all users currently in $P$ and if $|P|<  \lceil p /2 \rceil $, where $p$ is the total number of pruning edges. We stop adding pruned users to set $P$ when we have exactly $\lceil p/2 \rceil$  users. Note that since a user who made server choices of $u$ and $v$ can appear at most twice as a pruned edge, once with $u$ in the pruned witness tree and once with $v$ in the pruned witness tree. Thus, we are guaranteed to find $\lceil p/2 \rceil$ distinct pruned users. 
 
After the pruning process, we are left with a pruned witness tree with nodes representing distinct servers and edges representing distinct users. In addition, we have a set $P$ of $\lceil p/2 \rceil$ distinct pruned users, where $p$ is the number of pruning edges. 

\noindent{\bf (3)  Bounding the probability of pruned witness trees.} 
We enumerate possible witness trees and bound their probability using the union bound. Observe that since the (unpruned) witness tree is a $(\kappa + 1)$-ary tree of depth $\rho$, the number of nodes in the witness tree is 
\begin{equation}
m = \sum_{0 \leq i \leq \rho} (\kappa + 1)^i =  \frac{(\kappa+1)^{\rho + 1} - 1}{\kappa} \leq 2 \log^2 n, \label{eq:mbound}
\end{equation}
since $\rho = 2 \log\log n / \log (\kappa + 1)$ and hence $(\kappa + 1)^\rho = \log^2 n$.

{\em Ways of choosing the shape of the pruned witness tree.}
The shape of the pruned witness tree is determined by choosing the $p$ pruning edges of the tree. The number of ways of selecting the $p$ pruning edges is at most ${m  \choose p} \leq m^{p},$ since there are at most $m$ edges in the (unpruned) witness tree. 

{\em Ways of choosing users and
servers for the nodes and edges of the pruned witness tree}. The enumeration
proceeds by considering the nodes in BFS order.  The number of ways of
choosing the server associated with the root is $n$. Consider the $i^{th}$ internal node $v_i$ of the pruned witness tree
whose server has already been chosen to be $s_i$. Let $v_i$ have $\delta_i$
children. There are at most ${ {n} \choose {\delta_i} }$ ways of
choosing distinct servers for each of the $\delta_i$ children of
$v_i$.  Also, since there are at most $n$ users in the system
at any point in time, the number of ways to choose 
distinct users for the
$\delta_i$ edges incident on $v_i$ is also at most
${ {n} \choose {\delta_i}}$. There are ${\delta_i}!$
ways of pairing the users and the servers.
Further, the probability that a chosen user
chooses server $s_i$ corresponding to node $v_i$ and a specific one of
$\delta_i$ servers chosen above for $v_i$'s children is $$
\frac{1}{{n \choose 2}} = \frac{2}{n (n - 1)},$$ since each set of two servers is equally likely to be chosen in step 1 of the algorithm. Further, note that each of the $\delta_i$ users chose $\delta_i$ distinct applications and let the probability of occurrence of this event be $Uniq(n_c, \delta_i)$. This uniqueness probability has been studied in the context of collision-resistant hashing and it is known \cite{bellare2004hash} that
$Uniq(n_c, \delta_i)$ is largest when the content popularity distribution is the uniform distribution ($\alpha = 0$) and progressively becomes smaller as $\alpha$ increases. In particular, 
$Uniq(n_c, \delta_i) \leq e^{-\Theta(\delta^2_i/n_c)} < 1.$
Putting it together, the number of ways of choosing a distinct
server for each of the $\delta_i$ children of $v_i$, choosing a distinct
user for each of the $\delta_i$ edges incident on $v_i$, choosing a distinct application for each user, and  multiplying by the appropriate probability is at most 
\begin{equation}
\label{eq:expbd}
 {n \choose \delta_i} \cdot {n \choose \delta_i} \cdot {\delta_i}! \cdot \left (
\frac{2}{n (n - 1)}\right)^{\delta_i} \cdot Uniq(n_c, \delta_i) \leq \frac{2^{\delta_i}}{
\delta_i!}, 
\end{equation} 
provided $\delta_i > 1$.
Let $m'$ be the number of internal nodes $v_i$ in the pruned witness tree
such that $\delta_i = \kappa + 1$. Using the bound in Equation~\ref{eq:expbd}
for only these $m'$ nodes, the number of ways of choosing the users
and servers for the nodes and edges respectively of the pruned witness tree
weighted by the probability that these choices occurred is at most $$n \cdot (2^{\kappa+1} / (\kappa+1)!)^{m'}.$$
{\em Ways of choosing the pruned users in $P$.} Recall that there are $\lceil p/2 \rceil$ distinct pruned users  in $P$. The number of ways of choosing the users in $P$ is at
most ${n^{\lceil p/2 \rceil}}$, since at any time step there are at most $n$ users in the system to choose from. Note that a pruned user has both of its
server choices in the pruned witness tree. Therefore, the probability that a given user is a pruned user is at most $ m^2/
n^2.$ Thus the number of choices for the
$\lceil p/2 \rceil$ pruned users in $P$ weighted by the probability
that these pruned users occurred is at most
$${n^{\lceil p/2\rceil}} \cdot (m^2 / n^2)^{\lceil p/2
\rceil}
\leq (m^2/ n)^{\lceil p/2 \rceil}.$$
{\em Bringing it all together.} The probability that
there exists a pruned
witness tree with $p$ pruning edges, and $m'$ internal nodes with $(\kappa+1)$ children each, is at most
\begin{eqnarray}
m^ {p} \cdot  n \cdot (2^{\kappa+1} / (\kappa+1)!)^{m'} \cdot  (m^2/ n)^{\lceil p/2 \rceil} \nonumber \\
\leq n \cdot
(2^{\kappa+1} / (\kappa+1)!)^{m'}  \cdot (m^4 /n)^{\lceil p/2 \rceil} \nonumber  \\
\leq n \cdot (2e / (\kappa+1))^{m' (\kappa + 1)}  \cdot (m^4 /n)^{\lceil p/2 \rceil}, 
 \label{eq:bd2}
\end{eqnarray}
since $(\kappa + 1)! \geq ((\kappa + 1)/e)^{\kappa + 1}$.
There are two possible cases depending on how the pruning process terminates. If the number of pruning edges, $p$, equals $\kappa + 1$ then the third term of Equation~\ref{eq:bd2} is 
$$(m^4 /n)^{\lceil p/2 \rceil} \leq (16\log^8 n /n)^{\lceil (\kappa + 1)/2 \rceil} \leq 1/ n^{\Omega(1)},$$
using Equation~\ref{eq:mbound} and assuming that cache size $\kappa$ is at least  a suitably large constant.
Alternately, if the pruning process terminates with fewer than $\kappa + 1$ pruning edges, it must be that at least one of the $\kappa + 1$ subtrees rooted at the children of the root $s$ of the 
(unpruned) witness tree has no pruning edge. Thus, the number of internal nodes $m'$ of the pruned witness tree with $(\kappa + 1)$ children  each is bounded as follows:
$$m' = \sum_{0 \leq i < \rho -1 } (\kappa + 1)^i \geq (\kappa + 1)^{\rho - 2} \geq \log^2 n / (\kappa + 1)^2,$$ 
as $(\kappa + 1)^\rho = \log^2 n$. Thus, the second term of Equation~\ref{eq:bd2} is 
$$(2e / (\kappa+1))^{m' (\kappa + 1)}\leq (2e / (\kappa+1))^{\log^2 n/ (\kappa + 1)} \leq 1/n^{\Omega(1)},$$ 
assuming $\kappa > 2e -1$  but is at most $O(\log n/\log\log n)$.
Thus, in either case, the bound in
Equation~\ref{eq:bd2} is $1/n^{\Omega(1)}$. Further, since there
are at most
$m$ values for $p$, the total probability of a pruned
witness tree is at most
$m \cdot 1/n^{\Omega(1)}$ which is
$1/n^{\Omega(1)}$. This completes the proof of the theorem. 
\end{proof}

\end{document}